\documentclass[twocolumn,amsmath,amssymb,aps,pra,floatfix,nofootinbib,superscriptaddress,showpacs]{revtex4}

\usepackage{graphicx}
\graphicspath{{figures/pdf/},{figures/eps/}}
\usepackage{dcolumn}
\usepackage{bm}
\usepackage{amsthm}

\newtheorem{theorem}{Theorem}
\newtheorem{proposition}{Proposition}

\def\vec#1{{\bm #1}}

\def\ket#1{| #1 \rangle}
\def\bra#1{\langle #1 |}
\def\ip#1#2{\langle #1 | #2 \rangle}

\def\norm#1{\| #1 \|}

\def\RR{\mathbb{R}}

\def\dim{\operatorname{dim}}

\def\Span{\operatorname{span}}
\def\Reach{\operatorname{Reach}}

\def\Tr{\operatorname{Tr}}


\def\F{\mathcal{F}}

\def\H{\mathcal{H}}

\def\L{\mathfrak{L}}
\def\LL{\mathcal{L}}
\def\R{\mathcal{R}}

\def\RR{\mathbb{R}}

\def\uu{\mathbf{u}}
\def\su{\mathbf{su}}
\def\SU{\mathbf{SU}}
\def\UU{\mathbf{U}}

\def\sp{\mathbf{sp}}



\def\eps{\epsilon}

\begin{document}
\title{Global Control Methods for GHZ State Generation on 1-D Ising Chain}
\author{Xiaoting Wang}
\affiliation{Department of Applied Maths and Theoretical Physics, University of Cambridge,
Wilberforce Rd, Cambridge, CB3 0WA, United Kingdom}
\author{Abolfazl Bayat}
\affiliation{Department of Physics and Astronomy, University College
London, Gower St., London WC1E 6BT, United Kingdom}
\author{Sougato Bose}
\affiliation{Department of Physics and Astronomy, University College
London, Gower St., London WC1E 6BT, United Kingdom}
\author{Sophie G. Schirmer}
\affiliation{Department of Applied Maths and Theoretical Physics, University of Cambridge,
Wilberforce Rd, Cambridge, CB3 0WA, United Kingdom}

\date{\today}

\begin{abstract}
We discuss how to prepare an Ising chain in a GHZ state using a single
global control field only.  This model does not require the spins to be
individually addressable and is applicable to quantum systems such as
cold atoms in optical lattices, some liquid- or solid-state NMR
experiments, and many nano-scale quantum structures.  We show that GHZ
states can always be reached asymptotically from certain easy-to-prepare
initial states using adiabatic passage, and under certain conditions
finite-time reachability can be ensured.  To provide a reference useful
for future experimental implementations three different control
strategies to achieve the objective, adiabatic passage, Lyapunov control
and optimal control are compared, and their advantages and disadvantages
discussed, in particular in the presence of realistic imperfections such
as imperfect initial state preparation, system inhomogeneity and dephasing.
\end{abstract}

\pacs{02.30.Yy,03.67.Bg,75.10.Jm} 

\maketitle

\section{Introduction}

Spin chains are an important theoretical model to understand properties
of many-body systems such as quantum phase
transitions~\cite{phase-transition}, and have become popular as a
possible model for quantum computation (QC) and
communication~\cite{Bose1}.  Various types of spin-spin interactions
including the Heisenberg, XY and Ising model have been discussed both
analytically and numerically.  Among these, the Ising model is one of
the most ubiquitous, arising in many different settings from atoms in
optical lattices~\cite{Jaksch,Cirac}, to NMR systems~\cite{Chuang} to
ion traps~\cite{Deng-Porras-Cirac} and polar
molecules~\cite{Micheli-Brennen-Zoller}.  For many of these systems
addressing individual spins selectively is extremely difficult, limiting
the type of control we can implement.  For instance, for cold atoms in
an optical lattice, addressing individual atoms with an external laser
is very difficult as the waist of the controlling laser is on the scale
of many lattice sites.  One way to circumvent this problem is by
eliminating the need for local addressing, i.e., by using only control
fields that act globally on all the spins at once.

One type of global control was first proposed in~\cite{Lloyd,Simon1},
where it was demonstrated that universal QC can be achieved by
controlling the qubits collectively, and since then extensive work has
been done on this approach~\cite{Simon-Bose,Lovett,Fitzsimons,
Fitzsimons07}. However, in all of these proposals, the spin-spin
interaction Hamiltonian and the globally controlled Hamiltonian are not
sufficient to realize universal QC, and additional resources are
necessary.  For example, in~\cite{Simon1}, the spin chain consists of
two types of qubits, $A$ and $B$, arranged in an alternating, repeating
pattern: $ABABAB\cdots$, and it is assumed that the two types of
spin-spin interactions $H_{AB}$ and $H_{BA}$ can be switched on and off
as needed, which is a demanding experimental requirement.  In an
improved scheme~\cite{Simon-Bose}, the coupling Hamiltonian can be kept
unchanged but the transition frequencies of each qubit must be tuned
individually, which is still too difficult to implement for cold atoms
in optical lattices at this time.  In another scheme~\cite{Lovett}, the
need for such ``individual tuning ability'' is avoided but at the
expense of requiring a chain with a repeating pattern of four types of
qubits, raising a high demand for the preparation process.  If the goal
is universal quantum computation then such extra requirements are
necessary as a controllability argument shows that the system
Hamiltonian together with only a global control Hamiltonian do not
generate the full Lie algebra $\su(2^N)$ but only a proper subalgebra,
rendering the system uncontrollable.  One interesting question therefore
is what further assumptions are necessary for universal QC, a question
that has been addressed in various recent
papers~\cite{Simon1,Simon-Bose,Lovett}. However, there are many tasks
that do not require controllability, and we can ask what interesting
tasks we can perform using global control only, without additional
resources.  This is the focus of this article.

In particular we demonstrate that global control alone is sufficient to
steer an Ising chain from a certain initial product state to a GHZ
state~\cite{GHZ}. Such GHZ states (or multi-qubit Cat states or NOON
states, as they are variously called) are of utmost importance for
improving frequency standards beyond the classical realm
\cite{Wineland}, and could result in highly sensitive magnetometers
\cite{Taylor}. This is why recently there has been extensive interest in
their experimental realization, for example, in ion traps
\cite{Leibfried} and with multiple nuclear spins in a molecule
\cite{Jones}.  However, a fast/ non-adiabatic (so as to be robust to
decoherence) method of generating these systems with minimal control,
such as global fields on an Ising chain, is still an open problem, and
will be a very important milestone for realizing quantum enhanced
sensing and standards.  We again emphasize that all control schemes
relying purely on global control are useful for experiments on the
systems where individual addressability is not available.  We show that
with global control, certain GHZ states can be reached in finite time
from a given, easy-to-prepare product state, i.e., that there always
exists a control that achieves the task, and consider and compare three
different methods to design suitable controls: adiabatic passage,
Lyapunov control design and optimal control. One of the most interesting
issue of our proposal is that in all three methods the initial and final
states are the eigenstates of the Hamiltonian and hence the output state
does not evolve anymore.  This makes it convenient since no fine-tuning
of the control time is necessary and the output state can be saved for
further tasks.

The article is organized as follows: In Sec.~II the model and control
problems are defined.  In Sec.~III, controllability of the system, or
rather lack of it, and reachability are discussed.  In Sec.~IV three
methods for GHZ state generation are considered in detail, namely,
adiabatic passage, Lyapunov control and optimal control.  Finally, the
effect of various procedural imperfections such as imperfect
initialization, inhomogeneity and decoherence are studied in Sec. V,
followed by a brief summary and discussion of the results in Sec VI.

\section{Ising model and Control Problem}

In the following we consider a 1-D spin chain of length $N$ with the
Hamiltonian
\begin{align}
 \label{eqn:Hf}
 H_f(t)= J[H_0+f(t)H_1], 
\end{align}
where, $H_0$ models the fixed interaction between neighboring spins with
a fixed strength $J$, and $H_1$ the control interaction, corresponding
to an applied external field $f(t)$.  We choose $H_0$ and $H_1$ such
that the total Hamiltonian~(\ref{eqn:Hf}) is a uniform nearest-neighbor 
Ising interaction in a transverse time-dependent magnetic field
\begin{equation}
\label{eqn:H}
  H_0=\sum_{n=1}^{N-1} Z_n Z_{n+1}, \quad
  H_1=\sum_{n=1}^{N} X_n,
\end{equation}
where $X,Y,Z$ are the Pauli matrices.  Practically, one can think of
$f(t)$ as a time-dependent global magnetic field in the $x$-direction
that causes all spins to rotate simultaneously, while all spins are
constantly coupled via Ising interaction.  The field $f(t)$ is varied
with respect to time $t$ and at time $t=0$ takes $f_0=f(0)$.  The
associated controlled dynamical evolution is given by the Schrodinger
equation
\begin{align}
 \label{control-dynamics} 
 \frac{d}{dt}\ket{\psi(t)} =-i H_f(t)\ket{\psi(t)}
\end{align}
where we have assumed units such that $\hbar=1$.

Our main objective is to prepare an Ising chain of length $N$ with
Hamiltonian~(\ref{eqn:Hf}) in one of the following GHZ states
\begin{subequations}
\label{eqn:ghz-state}
\begin{align}
 \ket{\psi_d^{(1)}} 
  &= \textstyle
     \frac{1}{\sqrt{2}}(\ket{0\cdots0}+\ket{1\cdots1})\\
 \ket{\psi_d^{(2)}} 
  &= \textstyle
     \frac{1}{\sqrt{2}}(\ket{0\cdots0}-\ket{1\cdots1})\\
 \ket{\psi_d^{(3)}} 
  &= \textstyle
      \frac{1}{\sqrt{2}}(\ket{0101\cdots}+\ket{1010\cdots})\\
 \ket{\psi_d^{(4)}} 
  &= \textstyle
      \frac{1}{\sqrt{2}}(\ket{0101\cdots}-\ket{1010\cdots}),
\end{align}
\end{subequations}
starting in the ground state $\ket{\psi_0^{(f_0)}}$ of the Hamiltonian
$H_{f}(0)$, i.e., our aim is to find a magnetic field $f(t)$ such that
the system states evolves into one of the entangled GHZ states given in 
Eq. (\ref{eqn:ghz-state}) under the action of the resulting Hamiltonian, 
starting from the initial state  $\ket{\psi_0^{(f_0)}}$. 

In the absence of the magnetic field, $f_0=0$, the ground state of the
$H_f(0)$ is two-fold degenerate, spanned by
$\{\ket{\psi_d^{(1)}},\ket{\psi_d^{(2)}}\}$ for $J<0$, and
$\{\ket{\psi_d^{(3)}},\ket{\psi_d^{(4)}}\}$ for $J>0$.  Although the
target states are ground states of the Hamiltonian $H_0$, due to the
degeneracy cooling alone does not suffice to prepare the system in any
of the states~(\ref{eqn:ghz-state}).  Rather, simple cooling will result
in the system being left in a mixture of different ground states, which
is not useful.  On the other hand, in the presence of the magnetic field
the ground state of the total Hamiltonian $H_f(0)$ with $f_0\ne 0$ is
non-degenerate, and if the system is cooled in the presence of a global
field $f_0$ along the $x$-axis, it will be initialized in the ground
state $\ket{\psi_0^{(f_0)}}$ of $H_f(0)$.  As $|f_0|\to\infty$ we have
\begin{subequations}
\begin{align}
  \lim_{Jf_0\to-\infty} \ket{\psi_0^{(f_0)}} = \ket{+\ldots+},\\
  \lim_{Jf_0\to+\infty} \ket{\psi_0^{(f_0)}} = \ket{-\ldots-},
\end{align}
\end{subequations}
where $\ket{\pm}=\frac{1}{\sqrt{2}}(\ket{0}\pm\ket{1})$, and $\ket{0}$
and $\ket{1}$ are the eigenstates of $Z$, i.e., for sufficiently large
$|f_0|$ we can assume $\ket{\psi_0^{(f_0)}}$ to be approximately equal
to $\ket{\psi_0^{+}}=\ket{+\ldots+}$ or
$\ket{\psi_0^{-}}=\ket{-\ldots-}$.  Choosing $|f_0|$ large also ensures
a sufficiently large energy gap for efficient cooling.

\section{Symmetries, Non-Controllability, Reachability}

The dynamics~(\ref{control-dynamics}) can be expressed in terms of the
unitary process $U(t)$ satisfying
\begin{equation}
 \label{eqn:SEU}
  \dot U(t)  = -i H_f(t)U(t)
\end{equation}
with $U(0)=I$.  Denoting the solution of (\ref{eqn:SEU}) for a given
control $f(t)$ by $U(t,f(t))$, the reachable set of unitary operators
$\R$ is defined as the set of unitary matrices $U(t,f)$ that can be
generated by the dynamics~(\ref{eqn:SEU}) in a finite time $t_f$ for
some admissible control $f\in \F$, i.e., $\bar{U}\in \R$ if and only if
there exists an admissible control $f(t)$ such that
$U(t_f,f(t))=\bar{U}$.  From control theory, we have~\cite{Jurdjevic}:

\begin{theorem}
\label{reachable_set} Let $\L$ be the Lie algebra generated by
$\Span_{f\in\F}\{J[H_0+f(t)H_1]\}=\Span\{H_0,H_1\}$, called the dynamical
Lie algebra.  Then $\R=e^{\L}$.
\end{theorem}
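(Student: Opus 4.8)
The plan is to recognize the statement as the standard description of the reachable set of a right-invariant bilinear control system on a compact Lie group, and to prove the two inclusions $\R\subseteq e^{\L}$ and $e^{\L}\subseteq\R$ separately. Here the ambient group is $\SU(2^{N})$: both $H_0$ and $H_1$ are traceless, so the skew-Hermitian generators $-iH_0,-iH_1$ lie in $\su(2^{N})$, and $\L$ (the real Lie algebra they generate, i.e.\ the ``$-i$'' version of $\Span\{H_0,H_1\}$) is a subalgebra of $\su(2^{N})$ with $e^{\L}\subseteq\SU(2^{N})$. The compactness and connectedness of $\SU(2^{N})$ is precisely what makes the theorem hold despite the uncontrollable drift $-iJH_0$.

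For the inclusion $\R\subseteq e^{\L}$ I would differentiate~(\ref{eqn:SEU}): along any trajectory the right logarithmic derivative is $\dot U(t)U(t)^{-1}=-iH_f(t)=-iJH_0-iJf(t)H_1$, a real linear combination of $-iH_0$ and $-iH_1$ and hence an element of $\L$ for almost every $t$. Thus the curve $U(t,f)$ is everywhere tangent to the right-invariant distribution spanned by $\L$; since $\L$ is a subalgebra this distribution is integrable, and the integral leaf through $U(0)=I$ is exactly the connected subgroup $e^{\L}$. Therefore no admissible control can steer the propagator out of $e^{\L}$, giving $\R\subseteq e^{\L}$.

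The substance of the theorem is the reverse inclusion. The key steps are: (i) note that $\R$ is a subsemigroup of $e^{\L}$, since concatenating two admissible controls multiplies the corresponding propagators; (ii) apply the Lie-algebra rank condition inside $e^{\L}$ --- because the accessibility algebra generated by $-iH_0,-iH_1$ is by construction all of $\L$, the reachable set $\R$ has nonempty interior relative to $e^{\L}$; and (iii) exploit compactness to symmetrize the system. Step (iii) is the crux: in the compact group $e^{\L}$ the closure of the forward drift orbit $\{e^{-itJH_0}:t\ge 0\}$ is a subtorus, which therefore contains the backward elements $e^{+itJH_0}$ as limit points, so the drift can be ``run in reverse'' to arbitrary accuracy. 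Consequently the Lie saturate of the system is the full (symmetric) algebra $\L$, and by the Jurdjevic--Sussmann theorem the closure satisfies $\overline{\R}=e^{\L}$.

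It then remains to upgrade $\overline{\R}=e^{\L}$ to the exact equality $\R=e^{\L}$, for which I would invoke the elementary fact that a subsemigroup $S$ of a connected topological group $G$ with $\operatorname{int}(S)\neq\emptyset$ and $\overline{S}=G$ must equal $G$: given $g\in G$, the open set $g\,\operatorname{int}(S)^{-1}$ meets the dense set $S$, so some $s\in S$ satisfies $s=gv^{-1}$ with $v\in\operatorname{int}(S)\subseteq S$, whence $g=sv\in S\cdot S\subseteq S$. Combining this with (ii) and (iii) yields $\R=e^{\L}$. I expect step (iii) to be the main obstacle: making the reversal of the drift rigorous via recurrence of the one-parameter subgroup $e^{-itJH_0}$, and carefully passing from approximate (closure) to exact reachability. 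For the purposes of this paper these semigroup- and Lie-saturation arguments need not be reproduced, and it suffices to cite~\cite{Jurdjevic}.
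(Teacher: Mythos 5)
You have correctly identified that the paper itself contains no proof of Theorem~\ref{reachable_set}: it is quoted as a known result of geometric control theory with a citation to Jurdjevic and Sussmann~\cite{Jurdjevic}, and your two-inclusion structure ($\R\subseteq e^{\L}$ by tangency to the right-invariant distribution; $e^{\L}\subseteq\R$ via the semigroup property, the Lie-algebra rank condition giving nonempty interior, recurrence of the drift, and the ``dense subsemigroup with nonempty interior'' upgrade) is exactly the argument of that reference. So in outline your reconstruction is the intended one.

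There is, however, one genuine gap, and it sits precisely at the step you call the crux. You justify step (iii) by treating $e^{\L}$ as a compact group, inferring this from the compactness and connectedness of the ambient $\SU(2^{N})$. That inference is invalid: a connected Lie subgroup of a compact Lie group need not be closed, hence need not be compact; the standard example is a line of irrational slope winding densely in a torus, and this can occur here because $\L$ may have a nontrivial center --- the semisimple part $[\L,\L]$ always generates a compact subgroup (by Weyl's theorem), but the abelian part may generate a dense, non-closed subgroup of a maximal torus of $\SU(2^{N})$. The Jurdjevic--Sussmann exact-equality theorem requires compactness of $e^{\L}$ itself, not of the ambient group, and the distinction matters in your argument: when $e^{\L}$ is not closed, the recurrence of the drift holds only in the topology of $\SU(2^{N})$, so it yields $\overline{\R}=\overline{e^{\L}}$ with \emph{ambient} closures, whereas the nonempty interior produced by the rank condition is interior in the \emph{intrinsic} topology of $e^{\L}$; your final semigroup argument needs density and interior in the same topology and cannot be run. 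The failure is real, not an artifact of the proof: if $H_0$ and $H_1$ commute and are chosen so that $\exp$ is injective on $\L=\Span\{-iH_0,-iH_1\}$, then every reachable unitary has nonnegative drift time, and the elements of $e^{\L}$ with negative drift time are never exactly reached, even though the ambient group is compact. The paper is in fact aware of this subtlety, though it surfaces only later: in the discussion following Proposition~\ref{appendix}, compactness of $e^{\L}$ is argued via D'Alessandro's decomposition~\cite{dalessandro} of the dynamical Lie group into an abelian factor times a compact semisimple factor, together with numerical evidence that the abelian factor is compact for this system. Your proof should therefore either add the hypothesis that $e^{\L}$ is closed (equivalently compact) or carry out that verification; neither the compactness of $\SU(2^{N})$ nor the bare citation of~\cite{Jurdjevic} dispenses with it.
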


Hence, if $\L=\su(N)$ or $\L=\uu(N)$ then we have $\R=\SU(N)$ or
$\R=\UU(N)$ , the system is controllable in that we can implement any
unitary operator $U$ up to at most a global phase $e^{i\phi}$.  In this
case any pure state is reachable from any other pure state, and more
generally, any two density operators with the same spectrum can be
interconverted.  Since the spectrum of any operator is preserved under
unitary evolution, this is the most we can hope for.  In this sense,
controllability is a sufficient condition for reachability.  On the
other hand, any system that possess symmetries will not be fully
controllable.  For the Hamitonian $H_f(t)$, if there exists a Hermitian
operator $M$ such $[M,H_m]=0$ for $m=0,1$, then the Hamiltonians $H_0$
and $H_1$ are simultaneously block-diagonalizable. In this case the
Hilbert space can be decomposed into orthogonal invariant subspaces
$\H_k$ such that any initial state $\ket{\psi(0)} \in \H_k$ remains in
$\H_k$ under the evolution, regardless of what control $f(t)$ we apply.
Since no states outside $\H_1$ can be reached from an initial state in
$\H_1$, decomposability immediately implies non-controllability.
However, a target state $\ket{\psi_d}$ may still be reachable from an
initial state $\ket{\psi(0)}$ if both belong to the same subspace.  In
particular, this is the case if the system is controllable on the
relevant invariant subspace.

Applying these results to our Ising chain subject to global control we
see immediately that the system~(\ref{eqn:H}) possesses symmetries as
both Hamiltonians $H_0$ and $H_1$ commute with the ``$X$-parity''
operator $M=\prod_{n=1}^N X_n$.  Hence, $H_0$ and $H_1$ are
simultaneously block-diagonalizable.  In our case $M$ has two
eigenspaces with eigenvalues $\pm 1$, spanned by
\begin{equation}
  \H_\pm = \Span
  \left\{\frac{1}{\sqrt{2}}(\vec{e}_{_k}\pm \vec{e}_{2^{N-k+1}})\right\},
 \quad \forall k=1,\ldots,2^{N},
\end{equation}
where $\vec{e}_k$ is the basis vector with $1$ in the $k$th position and
all other entries $0$; changing from the basis $\{\vec{e}_k\}$ to an
eigenbasis of $M$ simultaneously block-diagonalizes $H_0$ and
$H_1$. This shows that the Ising chain with global control is not
controllable and explains why additional resources are required to
obtain universal
QC~\cite{Simon1,Simon-Bose,Lovett,Fitzsimons,Fitzsimons07}.  However,
there are many useful tasks that can be performed under the
evolution~(\ref{control-dynamics}).

For $N>2$ there are further symmetries and both blocks are further
decomposable.  There are various approaches to decompose the Hilbert
space into invariant subspaces that are not further decomposable.  One
approach is to proceed as before and find the symmetry operators $M$ on
each subspace.  If $V$ is an eigenbasis of $M$ then $\tilde{H}_m=V^\dag
H_m V$ will be block-diagonal.  This can be done recursively until no
further symmetries are found for any of the blocks, leaving us with
indecomposable blocks.  This approach becomes tedious, however, when
there are many symmetries.  Alternatively, we can calculate the
eigenvectors of a linear combination of the Hamiltonians, $H=\alpha
H_0+\beta H_1$.  Letting $V$ be a unitary matrix whose columns are the
normalized eigenvectors of $H$, we define an adjacency matrix
$A=(a_{ij})$ with $a_{ij}=1$ if the absolute value of the $(i,j)$th
element of the matrix $V^\dag H_0 V$ is greater than some threshold
value $\delta$, and $0$ otherwise, and find the connected components of
$A$, which define the respective invariant subspaces.  The accuracy of
this approach depends on suitable choice of $\alpha$, $\beta$ and
$\delta$.  Choosing $\delta=10^{-8}$ and $\alpha=2$, $\beta=3$, we
calculated the subspace decomposition for Ising chains up to $N=14$.

Having found a decomposition of the system into indecomposable subspaces
the next step is to verify if the initial and target states both belong
to the same invariant subspace.  For $J<0$ we verified numerically that
both $\ket{\psi_0^+}$ and the GHZ state $\ket{\psi_d^{(1)}}$ belong to
the same invariant subspace for $N=2,\ldots,14$.  To establish
reachability the next step is usually to try to show that the system is
controllable on this invariant subspace.  For $N=2$, this is easy.
Changing the basis from $\{\ket{0},\ket{1}\}$ to $\{\ket{+},\ket{-}\}$
the Hamiltonians become
\begin{equation*}
 H_0 
     =\begin{bmatrix}
       0 & 0 & 0 & 1 \\
       0 & 0 & 1 & 0 \\
       0 & 1 & 0 & 0 \\
       1 & 0 & 0 & 0
      \end{bmatrix}, \quad
 H_1 
     =\begin{bmatrix}
       -2 & 0 & 0 & 0\\
       0 & 0 & 0 & 0 \\
       0 & 0 & 0 & 0 \\
       0 & 0 & 0 & 2
      \end{bmatrix}
\end{equation*}
and similarly
\begin{align*}
 \ket{\psi_0^+} &=[1,0,0,0]^T, \quad
 \ket{\psi_d^{(1)}} &=[1,0,0,1]^T/\sqrt{2}.
\end{align*}
This clearly shows that $\ket{\psi_0^{+}}$ and $\ket{\psi_d^{(1)}}$
belong to a two-dimensional subspace $\H_s$ spanned by the basis vectors
$\vec{e}_1$ and $\vec{e}_4$, on which we have $H_0^{(s)}=X$ and
$H_1^{(s)}=-2Z$, showing that the dynamical Lie algebra generated is
$\SU(2)$.  Hence, the system is controllable on this subspace and
$\ket{\psi_d^{(1)}}$ is reachable from $\ket{\psi_0^{+}}$.

A similar approach allows us to establish reachability of the target
state from the initial state for $N=3$.  One might therefore hope that
the the system is controllable on the relevant invariant subspace $\H_s$
for all $N$.  Unfortunately, this is not true for higher dimensions.
For $N=4$, for instance, the smallest subspace $\H_s$ that contains both
$\ket{\psi_0^{+}}$ and $\ket{\psi_d^{(1)}}$ and is invariant under the
Hamiltonian has dimension $6$, while the dynamical Lie algebra $\L$
generated by $H_0$ and $H_1$ on the entire space has only dimension
$\dim{\L}=16$.  This is strictly smaller than the dimension needed for
controllability on the subspace, which is $35=\dim\su(6)$ for full
controllability and $21=\dim\sp(3)$ for pure-state controllability for a
six-dimensional subspace\footnote{Density operator controllability on a
subspace of dimension $N$ requires that we can generate the full Lie
algebra of (trace-zero) Hermitian matrices $\uu(N)$ ($\su(N)$); if we
are only interested in pure-state controllability it suffices if we can
generate symplectic Lie algebra $\sp(N/2)$. See \cite{controllability}}.
Hence, the system cannot be controllable on $\H_s$.  Explicit
calculations for various $N$ suggest that dynamical Lie algebra on the
entire Hilbert space is a $2^N$-dimensional reducible representation of
$\uu(N)$, and that the dynamical Lie algebra on the smallest invariant
subspace that contains both $\ket{\psi_0^{+}}$ and $\ket{\psi_d^{(1)}}$
is an irreducible representation of $\su(N)$ or $\uu(N)$.  As for $N>3$
the dimension of the subspace $\H_s$ is greater than $N$, this implies
non-controllability on $\H_s$ for $N>3$.  Therefore, in general a
different approach is needed to show that $\ket{\psi_d^{(1)}}$ is
reachable from $\ket{\psi_0^{+}}$.  Similar problems arise when trying
to assess the reachability or non-reachability of the other GHZ states
$\ket{\psi_d^{(k)}}$ from $\ket{\psi_0^{+}}$ or $\ket{\psi_0^{-}}$.

\section{Control Methods for GHZ Generation}

We now discuss three different methods for generating GHZ states
(\ref{eqn:ghz-state}) from the separable states $\ket{\psi_0^{+}}$ or
$\ket{\psi_0^{-}}$: (i) adiabatic passage; (ii) Lyapunov control and
(iii) the optimal control theory.  We shall see that adiabatic passage
demonstrates asymptotic reachability of certain GHZ states from certain
product states.  It is also has many benefits in that a simple field can
achieve high fidelity and rather robust population transfer for spin
chains of varying length.  A general drawback of adiabatic schemes,
however, is that the target state is exactly reachable only in the limit
$t\to +\infty$, and the time required to prepare the target state with
sufficiently high fidelity can be long.  This prompts the question
whether we could do better using some form of optimal control design
either Lyapunov control or global optimal control.  

\subsection{Adiabatic Passage}

Assume we initialize the system in the ground state
$\ket{\psi_0^{(f_0)}}$ of $H_{f}(0)=J[H_0+f_0H_1]$ for some $f(0)=f_0$.
If we can show that a particular GHZ state in (\ref{eqn:ghz-state}) is
the limit of the ground state of $H_f(t_f)$ as $f(t_f)\to 0$, then it is
possible to adiabatically transfer the system to one of the GHZ states
(\ref{eqn:ghz-state}).  Indeed, such a result has been implicitly shown
for an Ising ``chain'' with periodic boundary
conditions~\cite{Stelmachovic} using the Jordan-Wigner transformation,
and we can easily show directly that this result is true for proper
chains using the fact that $H_0$ and $H_1$ simultaneously commute with
the $X$-parity operator $M$, and thus that $[H_f(t),M]=0$ regardless of
the choice of $J$ and $f(t)$.  We clearly have
\begin{subequations}
 \begin{align}
 M\ket{\psi_0^{+}} = (+1)^N\ket{\psi_0^{+}}, \\
 M\ket{\psi_0^{-}} = (-1)^N\ket{\psi_0^{-}}.
\end{align}
\end{subequations}
This shows that $\ket{\psi_0^{+}}$ always has positive parity, while
$\ket{\psi_0^{-}}$ has positive parity for $N$ even, and negative parity
for $N$ odd. The same must hold for the finite values of $f_0$, i.e.,
the eigenstate $\ket{\psi_{0}^{(f_0)}}$ has positive parity if $f_0<0$,
or $f_0>0$ and $N$ even, and negative parity if $f_0>0$ and $N$ odd.  As
the parity is a conserved quantity, the adiabatic limit state
$\ket{\psi(t_f)}$ must have the same parity as $\ket{\psi_0^{(f_0)}}$.

It is easy to see that the intersection of the two-fold degenerate
ground state manifold of $JH_0$ with the $+1$ ($-1$) eigenspace of $M$
is unique.  For $J<0$ the ground state manifold of $JH_0$ is spanned by
the GHZ states $\{\ket{\psi_d^{(k)}}:k=1,2\}$, and it is easy to see
that $\ket{\psi_d^{(1)}}$ has positive, and $\ket{\psi_d^{(2)}}$
negative parity.  For $J>0$ the ground state manifold of $JH_0$ is
spanned by the GHZ states $\{\ket{\psi_d^{(k)}}:k=3,4\}$, and it is easy
to see that $\ket{\psi_d^{(3)}}$ has positive, and $\ket{\psi_d^{(4)}}$
negative parity.  Thus we have
\begin{subequations}
\begin{align}
  \ket{\psi_d^{(1)}} &\in \Reach\{\ket{\psi_0^{(f_0)}}: J<0, Jf_0<0, \mbox{
or}\nonumber\\
                     & \hspace{1.05in} J<0, Jf_0>0, N \mbox{ even}\},\\
  \ket{\psi_d^{(2)}} &\in \Reach\{\ket{\psi_0^{(f_0)}}: J<0, Jf_0>0, N
 \mbox{ odd}\},\\
  \ket{\psi_d^{(3)}} &\in \Reach\{\ket{\psi_0^{(f_0)}}: J>0, Jf_0<0,
 \mbox{ or}\nonumber \\
                     & \hspace{1.05in} J>0, Jf_0>0, N \mbox{ even}\},\\
  \ket{\psi_d^{(4)}} &\in \Reach\{\ket{\psi_0^{(f_0)}}: J>0,
  Jf_0>0, N \mbox{ odd}\}.
\end{align}
\end{subequations}

Although adiabatic passage provides a way to drive the system to a
certain GHZ state $\ket{\psi_d}$ in (\ref{eqn:ghz-state}), strictly
speaking it only implies asymptotic reachability of $\ket{\psi_d}$ from
the given initial state $\ket{\psi_0^{(f_0)}}$ as the Adiabatic Theorem
provides that the error between the exact final state $\ket{\psi(t_f)}$
and $\ket{\psi_d}$ will go to zero only as $t_f\to+\infty$, i.e., that
the fidelity
\begin{equation}
   F(t) = |\ip{\psi_d}{\psi(t)}|^2 \to 1 \mbox{ as } t\to +\infty,
\end{equation}
if the field $f(t)$ changes sufficiently slowly so that the rate of the
change of the ground state energy $\eps_1$, is small compared to the
energy gap between the ground state and the first excited state, i.e.,
$\Delta\eps=\eps_2-\eps_1$.  This is not really a problem in practice as
we are likely to be satisfied if we can get sufficiently close to the
target state in a finite time, and similar adiabatic schemes have indeed
been proposed, e.g., for strings of neutral atoms in~\cite{atom-Zoller}.
Moreover, under certain conditions asymptotic reachability implies
finite-time reachability.  

\begin{proposition}
\label{appendix} If $e^\L$, where $\L$ is the dynamical Lie algebra,
is compact then asymptotic reachability of $\ket{\psi_d}$ from an
initial state $\ket{\psi(0)}$ implies that $\ket{\psi_d}$ is
reachable from $\ket{\psi(0)}$ in finite time.
\end{proposition}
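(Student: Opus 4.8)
The plan is to recast both notions of reachability as statements about a single object --- the orbit of the initial state under the reachable group --- and then use compactness to upgrade ``lies in the closure'' to ``lies in the set''. By Theorem~\ref{reachable_set} the set of unitaries reachable in finite time is exactly $\R=e^\L$, so the set of states reachable from $\ket{\psi(0)}$ in finite time is the orbit $\O=\{U\ket{\psi(0)}:U\in e^\L\}$. Asymptotic reachability, as it arises here from $F(t)\to 1$, says precisely that the ray of $\ket{\psi_d}$ lies in the closure $\overline{\O}$, i.e. $e^{i\theta}\ket{\psi_d}\in\overline{\O}$ for some phase $\theta$. Thus the whole content of the Proposition reduces to showing that $\O$ is closed, for then $\overline{\O}=\O$ and $\ket{\psi_d}$ is reached (up to global phase) at a finite time.

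First I would observe that the evaluation map $U\mapsto U\ket{\psi(0)}$ is continuous from $e^\L$ into the unit sphere of $\H$. Since $e^\L$ is compact by hypothesis, its continuous image $\O$ is compact, and a compact subset of the Hausdorff space $\H$ is closed. This is where the compactness hypothesis does its essential work, converting the $t\to+\infty$ adiabatic limit into a state that is genuinely attained.

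It then remains to do the phase bookkeeping. Let $U_n\in e^\L$ and times $t_n$ realize the asymptotic limit, so that $\ket{\phi_n}:=U_n\ket{\psi(0)}$ obey $|\ip{\psi_d}{\phi_n}|\to 1$. By compactness of $\O$ I extract a subsequence with $\ket{\phi_{n_k}}\to\ket{\phi^*}\in\O$; continuity of the inner product then forces $|\ip{\psi_d}{\phi^*}|=1$, whence $\ket{\phi^*}=e^{i\theta}\ket{\psi_d}$. Because $\ket{\phi^*}\in\O$ there is a $U^*\in e^\L=\R$ with $U^*\ket{\psi(0)}=e^{i\theta}\ket{\psi_d}$, and since $U^*$ is reachable in finite time the target $\ket{\psi_d}$ is reached in finite time (up to the physically irrelevant global phase) by the control producing $U^*$ --- which need not be the slowly varying adiabatic field.

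The point I would take the most care to justify is the finite-time identity $\R=e^\L$ underlying the whole reduction: in the presence of the drift term $JH_0$ a bilinear system generally reaches only a subsemigroup of $e^\L$, and it is precisely compactness that rescues equality, since it makes the drift flow recurrent and so allows one to approximately undo it. Here I simply invoke Theorem~\ref{reachable_set}; the compactness hypothesis of the Proposition is exactly what makes that theorem applicable with equality rather than mere density, after which the closedness of the orbit does the rest. The passage to a convergent subsequence and the extraction of the phase $\theta$ are routine and carry no real difficulty.
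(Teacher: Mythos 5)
Your proof is correct and follows essentially the same route as the paper's: compactness of $e^\L$ is used to extract a convergent (sub)sequence of unitaries whose limit $\bar U\in e^\L$ maps $\ket{\psi(0)}$ to the target, and Theorem~\ref{reachable_set} then converts membership in $e^\L$ into reachability in finite time. Your two refinements---tracking the global phase explicitly (since $F(t)\to 1$ fixes only the ray of the limit state) and observing that compactness is also what makes the equality $\R=e^\L$ in Theorem~\ref{reachable_set} exact rather than merely a density statement---are points the paper's own proof passes over silently, but they do not change the underlying argument.
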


\begin{proof}
If $\ket{\psi_d}$ is asymptotically reachable from $\ket{\psi(0)}$
then there exists a path $\ket{\psi(t)}$ such that
$\lim_{t\to+\infty}\ket{\psi(t)}=\ket{\psi_d}$.  We can choose a
time sequence $\{t_n\}$ with $t_n\to +\infty$ such that
$\ket{\psi(t_n)}=U_n\ket{\psi(0)}\to \ket{\psi_d}$, where $U_n\in
e^\L$ is a unitary process.  Since $\{U_n\}$ is a sequence in the
compact group $e^\L$, there exists a converging subsequence
$\{U_{n_k}\}$ such that $U_{n_k}\to\bar U\in e^\L$, satisfying
$U\ket{\psi(0)}=\ket{\psi_d}$.  From Theorem~\ref{reachable_set},
$\bar U\in e^\L$, i.e., there exists a dynamical trajectory during
$[0,T]$ for some finite $T$ such that $\ket{\psi(T)}=\ket{\psi_d}$.
\end{proof}

According to~\cite{dalessandro} the dynamical Lie group is the direct
product of an Abelian Lie group and a semi-simple compact Lie group.
Numerical computations of the generators of the Abelian Lie group for
various $N$ suggest that in our case, the Abelian group is compact, and
thus that $e^\L$ generated by $H_f(t)$ is compact, and hence we have
finite-time reachability.

\begin{figure*}
\includegraphics[width=\columnwidth]{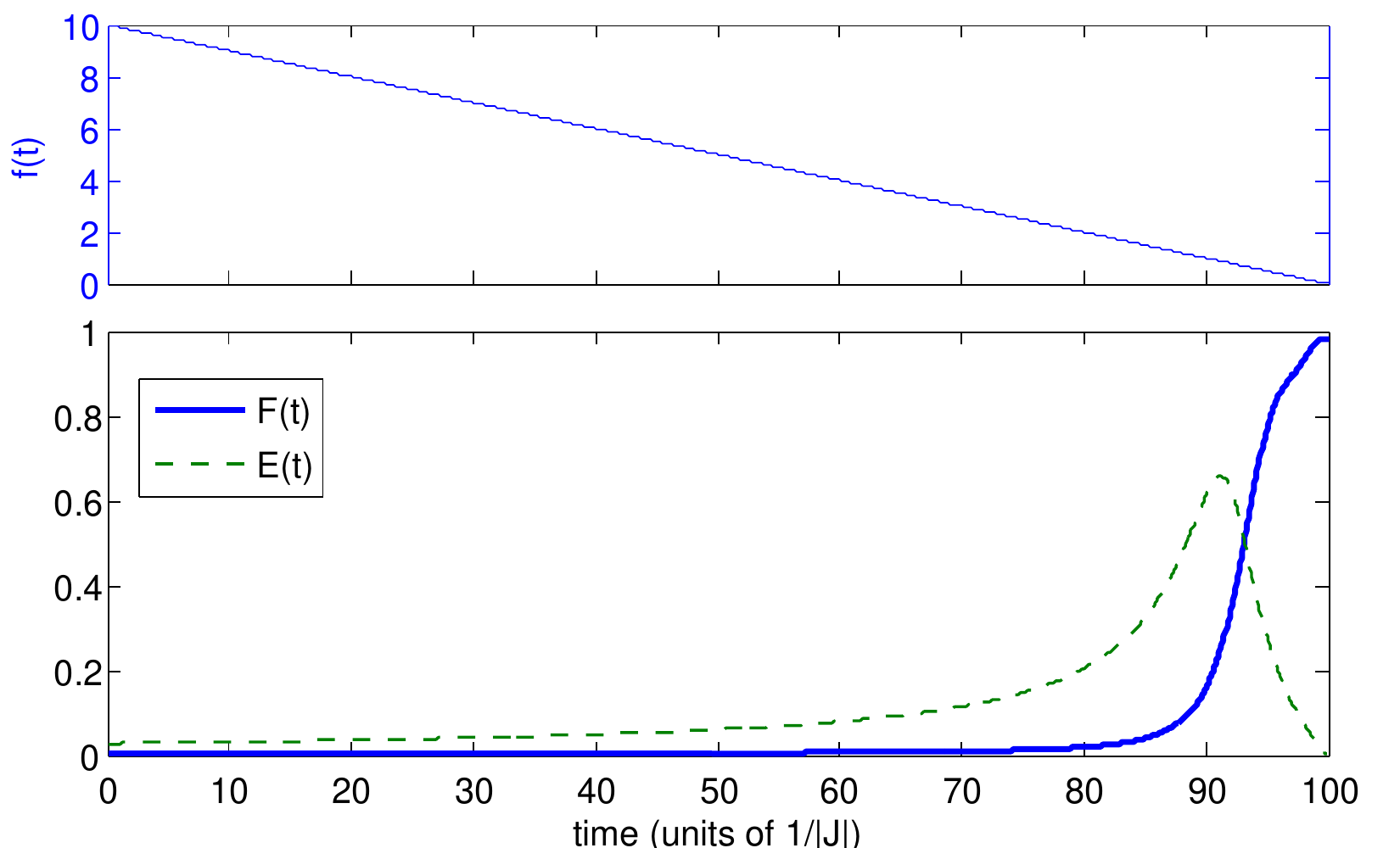}
\includegraphics[width=\columnwidth]{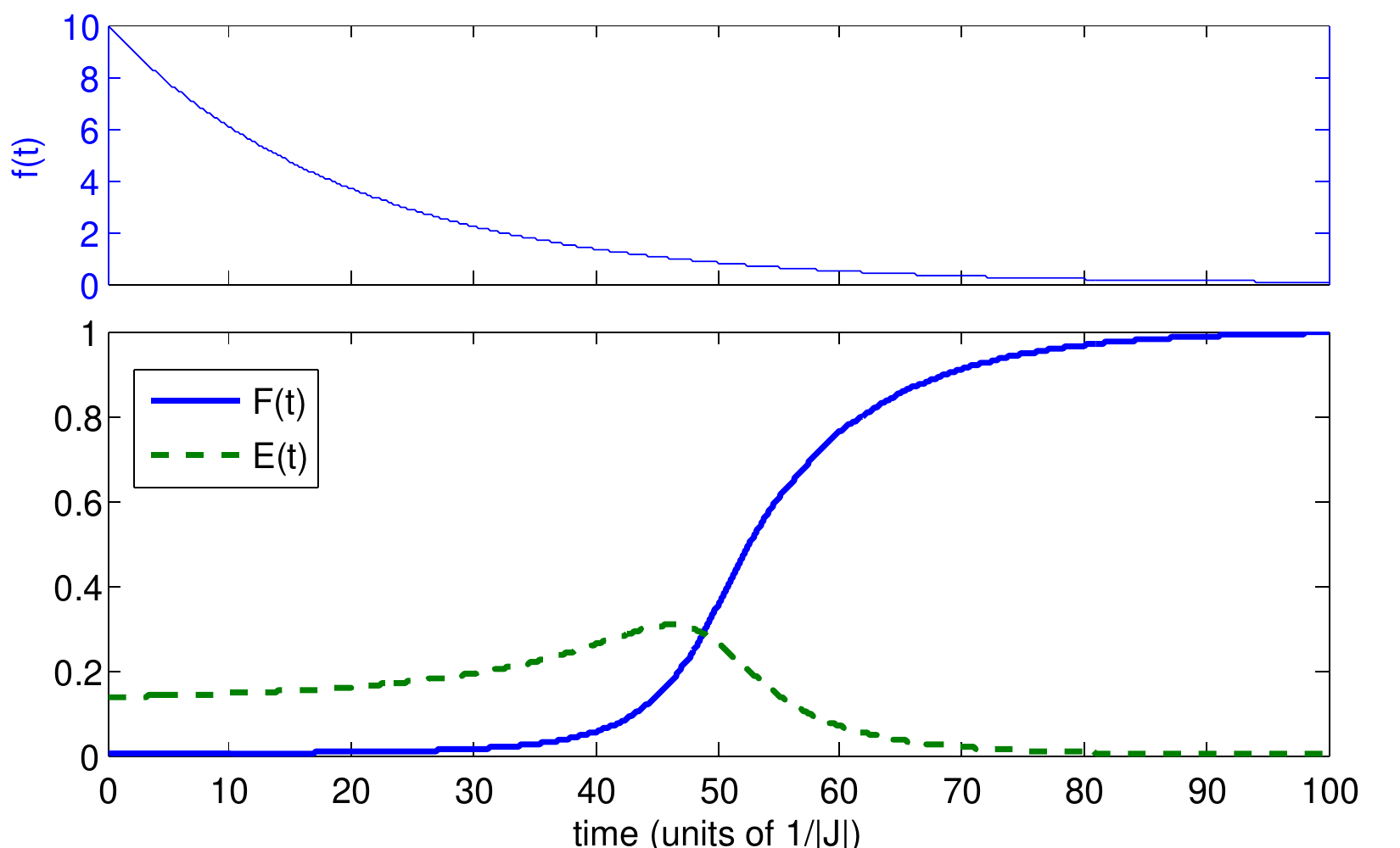}
\caption{(Color online) Adiabatic control (top) and corresponding
relative energy gap $E(t)=\dot{\eps}_1(t)/\Delta\eps(t)$ and fidelity
$F(t)$, for a linearly varying field, $f(t)=10(1-t/100)$, left, and and
exponentially varying control, $f(t)=10e^{-0.05t}$, (right).  The
fidelity $F(t)$ corresponding to the projection onto the GHZ state
$\ket{\psi_d^{(1)}}$ in both cases asymptotically approaches a limiting
value of $\approx 1$, but for the linear control it increases sharply
near the end of the pulse.  For the exponentially decaying control field
the increase begins much sooner, is more gradual, and higher fidelities 
can be achieved in shorter time.}
\label{fig:pulse_F}
\end{figure*}

\begin{figure}
\includegraphics[width=\columnwidth]{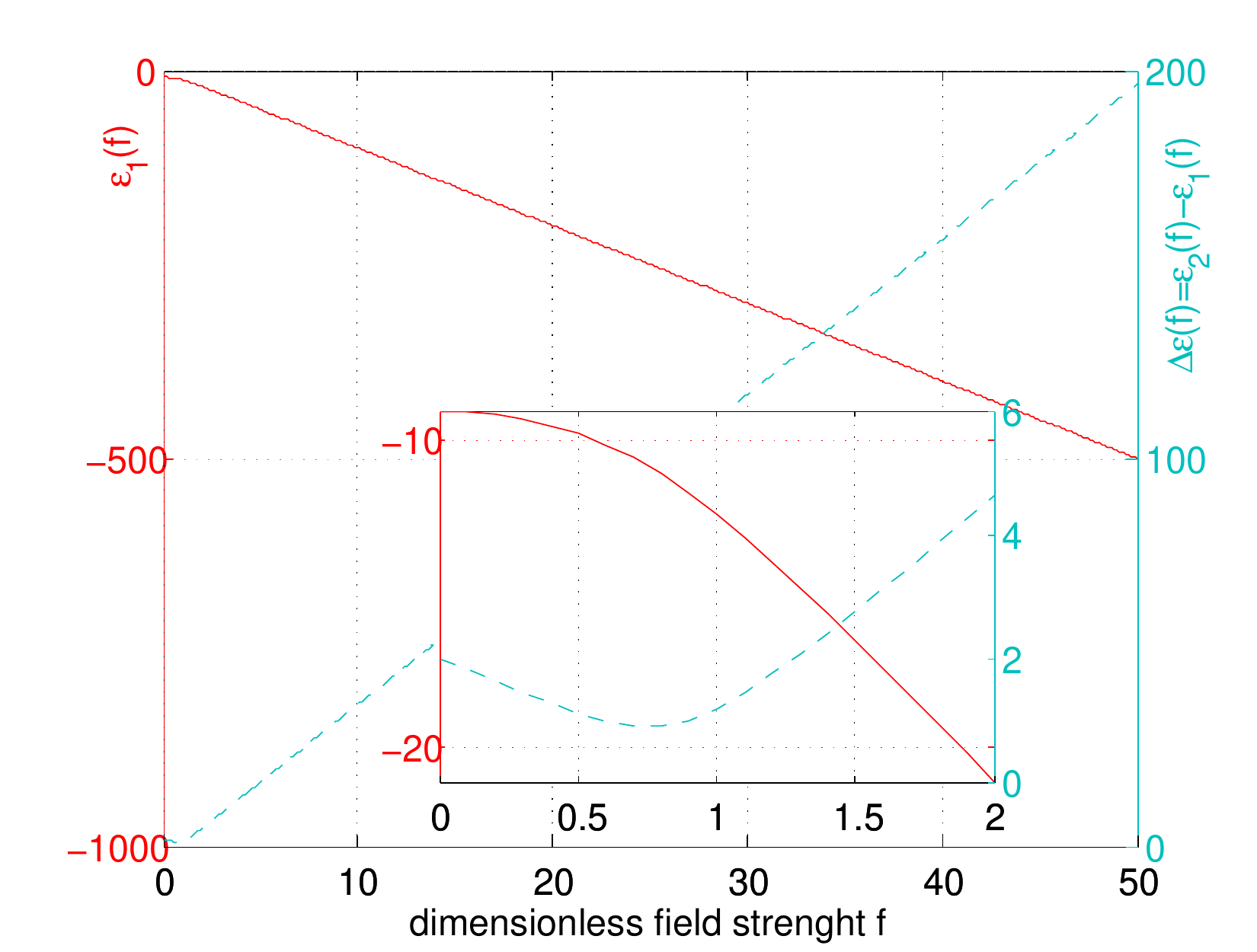}
\caption{(Color online) Ground state energy $\eps_1$ (solid line) and
energy gap (dashed) $\Delta\eps=\eps_2-\eps_1$ for a chain of length
$10$ with Hamiltonian $H_f$, given in Eq. (\ref{eqn:Hf}), and $J<0$ as a
function of the applied field $f$, for the dynamics restricted to the
invariant subspace $\H_s$ containing the GHZ state $\ket{\psi_d^{(1)}}$.
For large fields the $H_1$ term dominates and the ground state energy
and energy gap depend approximately linearly on the field: $\eps_1(f)
\approx -10f$, $\Delta\eps(f)\approx 4 f$.  For $0<f<2$ the inset
suggests an approximately quadratic dependence of both the ground state
energy and energy gap on the applied field.
} \label{fig:adia_gap}
\end{figure}

To assess the performance of adiabatic passage we turn to simulations.
Although the choice of $f(t)$ does not matter in theory, provided it
varies sufficiently slowly and vanishes as $t\to+\infty$, in practice we
are usually interested in preparing a sufficiently close approximation
to the target state in as little time as possible, and in this case the
choice of $f(t)$ does matter as a comparison of two simple controls, a
linearly decreasing field $f(t)=f_0(1-t/t_f)$ for $t\in [0,t_f]$, and a
decaying exponential, $f(t)=f_0 e^{-\mu t}$, in Fig.~\ref{fig:pulse_F}
shows.  The results of these simulations suggest that the latter choice
is preferable in terms of speed and robustness.  This can partly be
explained by comparing $E(t)=\dot{\eps}_1/\Delta\eps$.  For the linear
field $E(t)$ is negligible for most of the pulse duration and spikes
towards the end of the pulse, mirroring the sharp increase in the
population of the target state near the target time
(Fig.~\ref{fig:pulse_F}, left).  For the exponentially decaying field
$f(t)$ drops to $f\approx 1$ much faster and spends more time in the
region $0<f<1$, where most of the interesting evolution takes place.
Also, for the exponential field we have $\dot{f}(t)/f(t)=\mu$, i.e.,
$E(t)\approx -2.5\mu$ is approximately constant until the field has
dropped to $f\approx 1$, while for the linear field
$\dot{f}(t)/f(t)=\mu/f(t)$, where $\mu$ is the slope, and thus
$E(t)\approx -2.5\mu/f(t)$ will be negligibly small for $f(t)$ large.

In both cases the choice of $f_0$ is dictated by practical concerns.
Fig.~\ref{fig:adia_gap} shows the ground state energy $\eps_1$ and the
energy gap $\Delta\eps$ for $H_f(t)$ with $J<0$, as a function of field
strength $f$. It shows that for $f>1$ the energy gap increases with $f$.
Assuming the rate of cooling to be proportional to the energy gap, this
compels us to choose $f_0$ as large as possible to ensure efficient
cooling and initial state preparation.  It also ensures that the initial
state is close to the desired initial state $\ket{\psi_0^+}$ as the
error $1-|\ip{\psi_0^+}{\psi_0^{(f_0)}}|^2$ decreases quadratically in
$f_0$.  At the same time, to maintain adiabaticity, the field must
decrease slowly, and thus the time required for the field to decay to a
certain value close to zero increases proportionally.  For fixed $f_0$
the asymptotic value of the fidelity depends on the decay rate $\mu$.
In general it decreases as $\mu$ increases.  At the same time, the time
required to reach a certain target fidelity (below the asymptotic value)
decreases with increasing $\mu$.  Thus, if we are only interested in
achieving a certain target fidelity of say 99\%, there will be an
optimal value of the decay rate $\mu$ that achieves 99\% transfer the
shortest amount of time.  Fig. \ref{fig:adia_error} shows the log-error
$\log_{10}(1-F)$ for chains for different length $N$ for an
exponentially decaying field with $\mu=0.1$.  We note that the time when
the fidelity crosses the 99\% threshold remains in a narrow range of
$[44,49]$ for a significant range of $N$.  The onset of oscillations in
the evolution of the fidelity (population of the target state) for
larger $N$ suggests that non-adiabatic effects arise for this $\mu$,
resulting in population transfer from the (instantaneous) ground state
$\ket{\psi_1(t)}$ of $H(t)$ to excited states.  This is confirmed by the
population plot $1-p_1(t)$ for $N=11$ (Fig.~\ref{fig:adia_error},
inset), and shows that we must reduce the decay rate $\mu$ for longer
chains to maintain adiabatic evolution.

\begin{figure}
\includegraphics[width=\columnwidth]{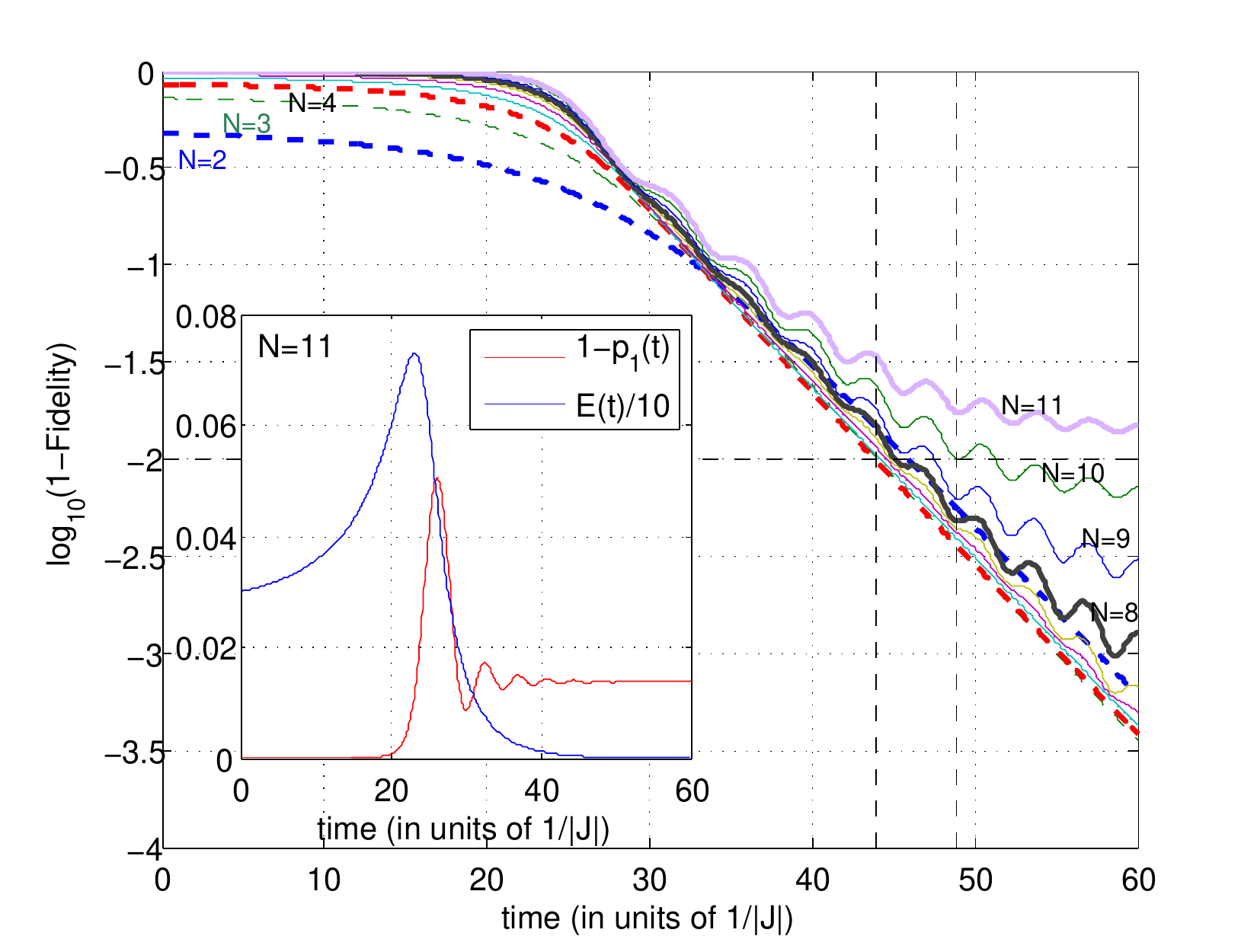} \caption{(Color
online) The log-error, i.e., log of population not transferred to the
target state, as function of time for Ising chains (\ref{eqn:Hf}) with
$J<0$ of length $N=2,\ldots 11$ starting in $\ket{\psi_0^{(f_0)}}$,
shows that the field $f(t)=10e^{-0.1t}$ achieves 99\% population
transfer to the target state for chains of length $N=2$ to $10$ in a
narrow time window $t_{0.99}\in[44,49]$.  For larger $N$ oscillations
indicate non-adiabatic effects, which prevent reaching 99\% fidelity for
$N=11$ with this field.  The population of the upper levels $1-p_1(t)$
(inset) clearly shows that the pulse induces some population transfer to
these states, occuring, as expected, following the peak in $E(t)$,
indicating that the peak value is too large to fully maintain
adiabaticity.}  \label{fig:adia_error}
\end{figure}

\subsection{Lyapunov control}

A simple way to solve certain optimal control problems is construct
a Lyapunov function for the dynamics (\ref{control-dynamics}).  A
natural candidate for a Lyapunov function is a monotonic function of
the Hilbert-Schmidt distance for density operators such as
\begin{equation} \textstyle
 V(\rho,\rho_d) = \frac{1}{2}\norm{\rho-\rho_d}^2
                = \frac{1}{2}\Tr[(\rho-\rho_d)^2].
\end{equation}
satisfying $V\ge 0$ which equality holds if and only if
$\rho_d=\rho$. The essential idea of Lyapunov control is to design
appropriate control dynamics such that $V$ becomes a Lyapunov function,
i.e., $V$ keeps decreasing along every trajectory of $\rho(t)$.  This
can be realized by choosing
\begin{equation}
\label{eqn:Lya-control}
  f(t) = f(\rho(t),\rho_d) = \kappa \Tr([iH_1,\rho_d]\rho(t)).
\end{equation}
Then for $V(t)=V(\rho(t),\rho_d)$, we have
\begin{equation}
  \dot V(t) = -f(t)\Tr([iH_1,\rho_d]\rho(t)) = -\kappa f(t)^2\le 0,
\end{equation}
i.e., $\rho(t)$ evolves towards $\rho_d$.  Ideally, if $V(t)\to 0$ as
$t\to+\infty$, we have $\rho(t)\to\rho_d$, but this does not always hold
in general.  However, the LaSalle invariance principle~\cite{LaSalle}
ensures that every solution $\rho(t)$ under (\ref{eqn:Lya-control})
converges to a set, often known as the LaSalle invariant set, and it
has been shown that the control design above renders the target state
$\rho_d$ almost globally attractive if (i) $H_0$ is strongly regular and
(ii) $H_1$ is fully connected~\cite{two-atoms}. Condition (i) requires
$H_0$ has distinct transition frequencies between any pair of energy
levels in the smallest invariant subspace $\H_s$ containing $\rho_0$ and
$\rho_d$.

\begin{figure}
\includegraphics[width=\columnwidth]{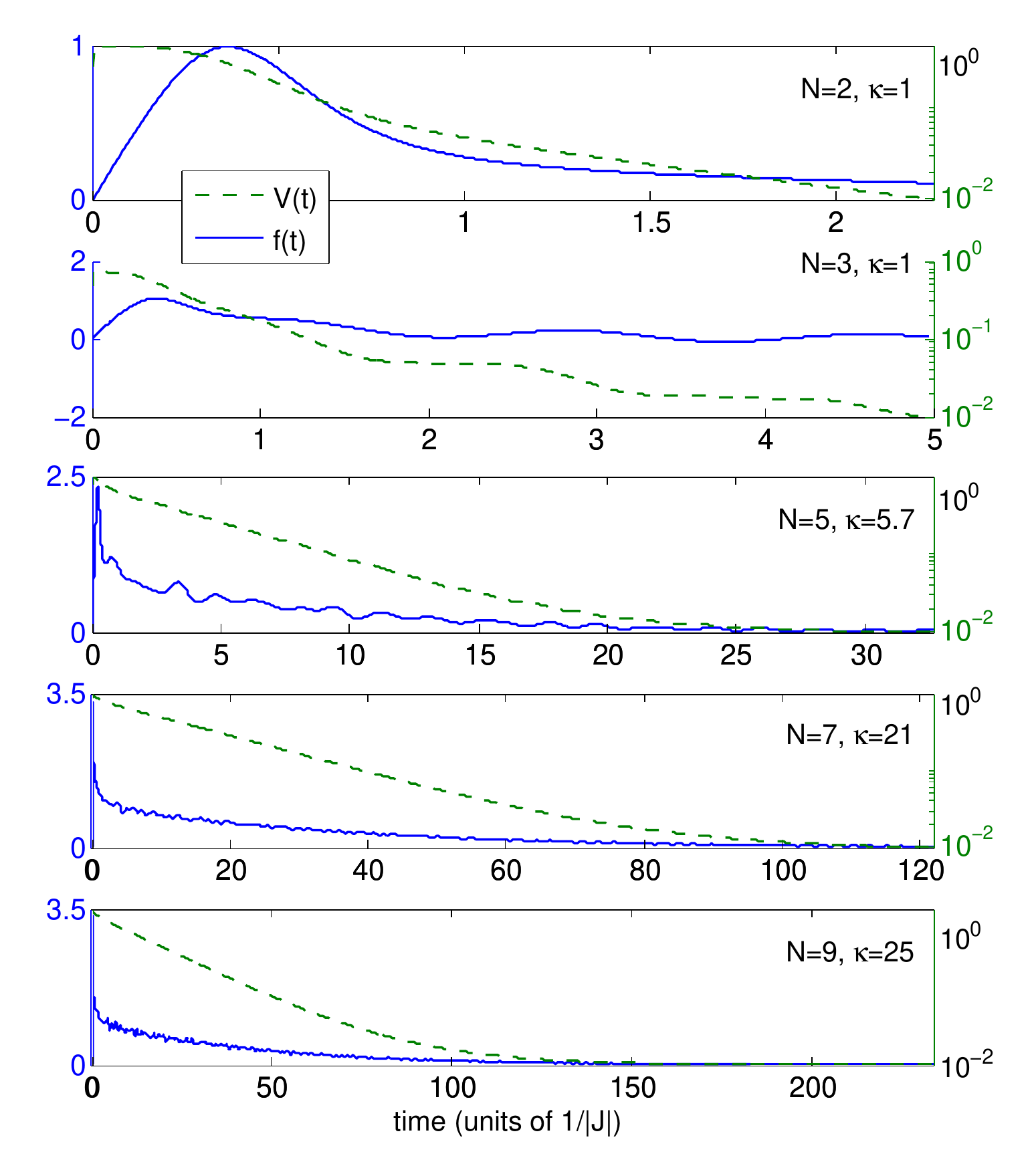} \caption{(Color
online) Examples of Lyapunov controls (solid lines) and ``distance''
$V(\rho(t),\rho_d)$ (dashed) from the GHZ state $\ket{\psi_d^{(1)}}$ for
Ising chains (\ref{eqn:Hf}) with $J<0$, of different length $N$,
assuming the system is initially prepared in the ground state
$\ket{\psi_0^{(f_0)}}$ with $f_0=10$.  The Lyapunov control pulses
achieve the desired state transfer in much shorter time than the
adiabatic controls for small $N$, and the approximately linear decrease
of the distance on the logarithmic scale, suggests that the distance
decreases exponentially.  For longer chains, however, the feedback
strength $\kappa$ has to be increased and the effects of ``attractive''
limit cycles interfere with convergence, resulting in control pulses
comparable or even longer than the corresponding adiabatic transfer
pulses.}
\label{fig:Lyapunov}
\end{figure}

For $N=2$ spins, with the initial and the target states chosen as
before, it can be verified that the two conditions above hold for the
dynamics restricted to $\H_s$, and the GHZ state $\rho_d$ is almost
globally attractive on $\H_s$.  Hence the Lyapunov control pulse
(\ref{eqn:Lya-control}) will steer the system from $\ket{\psi_0^+}$ to
$\rho_d$.  Moreover, as demonstrated in~\cite{two-atoms}, the Lyapunov
control design ensures that $\rho(t)$ converges ``exponentially'' to
$\rho_d$, and is insensitive to timing errors---unlike geometric control
schemes that require precise on-off switching of the control fields and
amplitude control, for instance.  Unfortunately, for longer Ising
chains, convergence of $\rho(t)$ to $\rho_d$ is no longer assured, as
the energy levels of $H_0$ on $\H_s$ are equally spaced, i.e., condition
(i) does not hold, and for $N\ge 4$ full connectivity is lost.
Theoretical analysis~\cite{IEEE_paper} shows that in this case the
invariant set is large, and most solutions converge to ``limit cycles''
a finite distance from the target state $\rho_d$. Nonetheless,
simulations suggest that we can ensure $\rho(t)$ with
$\rho(0)=\ket{\psi_0^{(f_0)}}\bra{\psi_0^{(f_0)}}$ converges to a point
very close to $\rho_d$ by carefully tuning the so-called feedback
strength $\kappa$ in (\ref{eqn:Lya-control}), as illustrated in
Fig.~\ref{fig:Lyapunov}.  We find that (i) for a given $N$ the final
fidelity achieved usually increases with $\kappa$; (ii) for a given
$\kappa$ the final fidelity decreases with increasing $N$.  This can be
explained by the fact that the dimension of the center manifold
surrounding $\rho_d$ increases rapidly with $N$.  Moreover, to achieve
higher fidelities for larger $\kappa$, the control time increases
rapidly as well.  Therefore, considering the transfer time required to
achieve a certain target fidelity close to $1$, Lyapunov control has a
significant edge for small $N$, but the advantage disappears for longer
chains (see Fig.~\ref{fig:Lyapunov}).  Thus, the Lyapunov control design
is only effective for short Ising chains.

\subsection{Optimal Control}

Lyapunov control can be considered as a kind of optimal control in that
the distance from the target state is monotonically decreasing with
time.  This form of optimal control is sometimes referred to as local
optimal control because at each point in time the control design is
based only on information about the current state of the system and the
target state.  From a computational point its main advantage is that the
control pulses can be calculated directly in a non-iterative fashion,
but as the last section shows, for more complex problems such as longer
chains this approach is not sufficiently powerful.  An alternative
method is to take a global approach, specify a target time $t_f$, and
attempt to maximize the fidelity $F(t_f)$ by globally varying the
control pulse $f(t)$ in the allowed control function space.  In
practice, optimal control problems over function spaces can generally be
solved only numerically, by parameterizing or discretizing the control
$f(t)$. The simplest and most common approach is to subdivide the time
interval $[0,t_f]$ and approximate the control $f(t)$ by a constant on
each subinterval $I_k$. This results in an optimization problem over
$\RR^K$, where $K$ is the number of time intervals, which can be solved,
e.g., by starting with an initial trial field $f^{(0)}(t)$ and
iteratively refining $f^{(n)}(t)$ such that the fidelity at the target
time $F^{(n)}(t_f)$ monotonically increases as a function of the
iteration $n$.  The crucial part of this procedure is the way
$f^{(n)}(t)$ is updated in each iteration.  We adopted here a
quasi-Newton method developed by Boyden, Fletcher, Goldfarb, and
Shanno~\cite{BFGS}.  A more detailed discussion of the optimization
process can be found in~\cite{chain-optimal}.

The adiabatic and the Lyapunov control pulses found earlier provide
upper bounds on $t_f$ to reach a certain final fidelity $F<1$, and our
previous reachability considerations for Ising chains with Hamiltonian
$H_f(t)$ suggest that there exists a solution $f(t)$ with $F(t_f))=1$
for a finite time $t_f$, although the proof does not give any hint as to
the best driving field $f(t)$ or how to discretize the control pulse.
Although it can be shown, roughly, that if the target state is reachable
by some admissible control field then there also exists a piecewise
constant control that achieves the same task, the theorem again does not
tell us how many time steps are needed.  If the resolution is too low,
i.e., $\Delta t=t_f/K$ is too large, the target fidelity may not be
achievable.  An interesting question for optimal control therefore is
how fast we can hope to achieve transfer to the target state, and what
time resolution of the field is required.  Can we do better than
adiabatic control?

\begin{figure}
\includegraphics[width=\columnwidth]{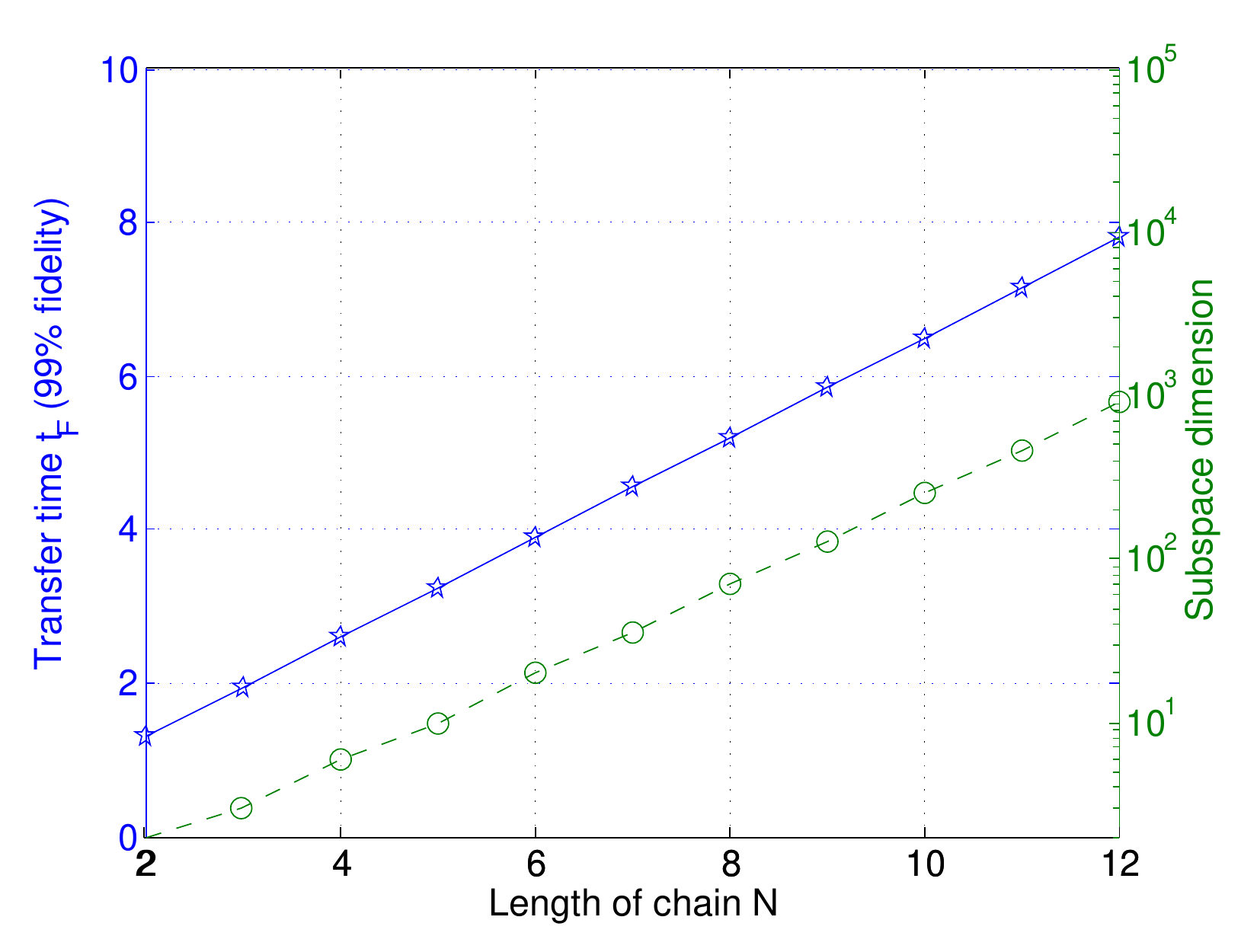} \caption{(Color
online) Lower bounds on the time $t_f$ (solid line; y-axis linear
scale) required to achieve 99\% transfer fidelity as a function of
chain length for $N=2$ to $N=12$, derived from optimal control
simulations.  Note the linear dependence of $t_f$ on $N$, despite an
exponential increase in the dimension of the relevant subspace
$\H_s$ (dashed line; y-axis logarithmic scale).}
\label{fig:opti-mini-time}
\end{figure}

Before we address these questions a final issue that needs to be
considered is constraints.  For example, the amplitude of the control
pulses will usually be limited by what can be achieved experimentally.
Whether to impose constraints and what kind depend on the details about
the system and the implementation.  For typical values in NMR
experiments $J$ is usually a few hundred Hz, while we can easily achieve
fields up to 50kHz for liquid-state NMR and is a few hundred kHz for
solid-state NMR~\cite{Chuang}, which would suggest a reasonable upper
bound on the field amplitudes of perhaps $10^3|J|$.  Simulations for
this problem suggest that bounds of this magnitude can usually be
neglected as the unconstrained optimization solutions almost always
satisfy these constraints, and unconstrained optimization is
computationally more efficient than constrained optimization.

Using an unconstrained optimization based on the above-mentioned
quasi-Newton method, we calculated the optimal pulses for Ising chains
of length $N=2$ to $12$.  Assuming $J<0$ and starting with the initial
state $\ket{\psi_0^+}$, we calculated optimal controls for different
values of the target time $t_f$ and time steps $K$, choosing the GHZ
state $\ket{\psi_d^{(1)}}$ as a target state.  For chains up to $N=12$
we were able to find controls that achieve at least $99$\% fidelity for
$t_f(N)=0.65N$ and $K=3N$, and in general only one or two runs with
either $f^{(0)}_k=1$ or a random number in $[0,1]$ for $k=1,\ldots,K$
were needed for the optimization to succeed in finding a suitable
control.  This shows that the transfer times for the optimal controls
are much shorter than those for the adiabatic pulses -- by about one
order of magnitude -- and the transfer time appears to be increase
linearly with the chain length $N$, at least up to $N=12$ as shown in
Fig. \ref{fig:opti-mini-time}.  This is quite surprising when one
considers that the dimension of the smallest invariant subspace $\H_s$
that contains the GHZ state $\ket{\psi_d^{(1)}}$ increases exponentially
in $N$.  A possible explanation for this linear dependence lies in the
dimension of the reachable set.  The dynamical Lie algebra on the
subspace $\H_s$ appears to be a high-dimensional representation of
$\su(N)$ or $\uu(N)$, suggesting that the reachable set starting with a
pure initial state is the homogeneous space
$\UU(N)/[\UU(1)\otimes\UU(N-1)]$, which has dimension
$N^2-1-(N-1)^2=2N-2$~\cite{Schirmer-orbits}.  This suggests that the
reachable set is a ($2N-2$)-dimensional manifold, which may explain the
apparently linear dependence in the observed bounds on the transfer
times despite the exponential increase in the dimension of the subspace
$\H_s$ it is embedded in (Fig. \ref{fig:opti-mini-time}).

\begin{figure}
\includegraphics[width=\columnwidth]{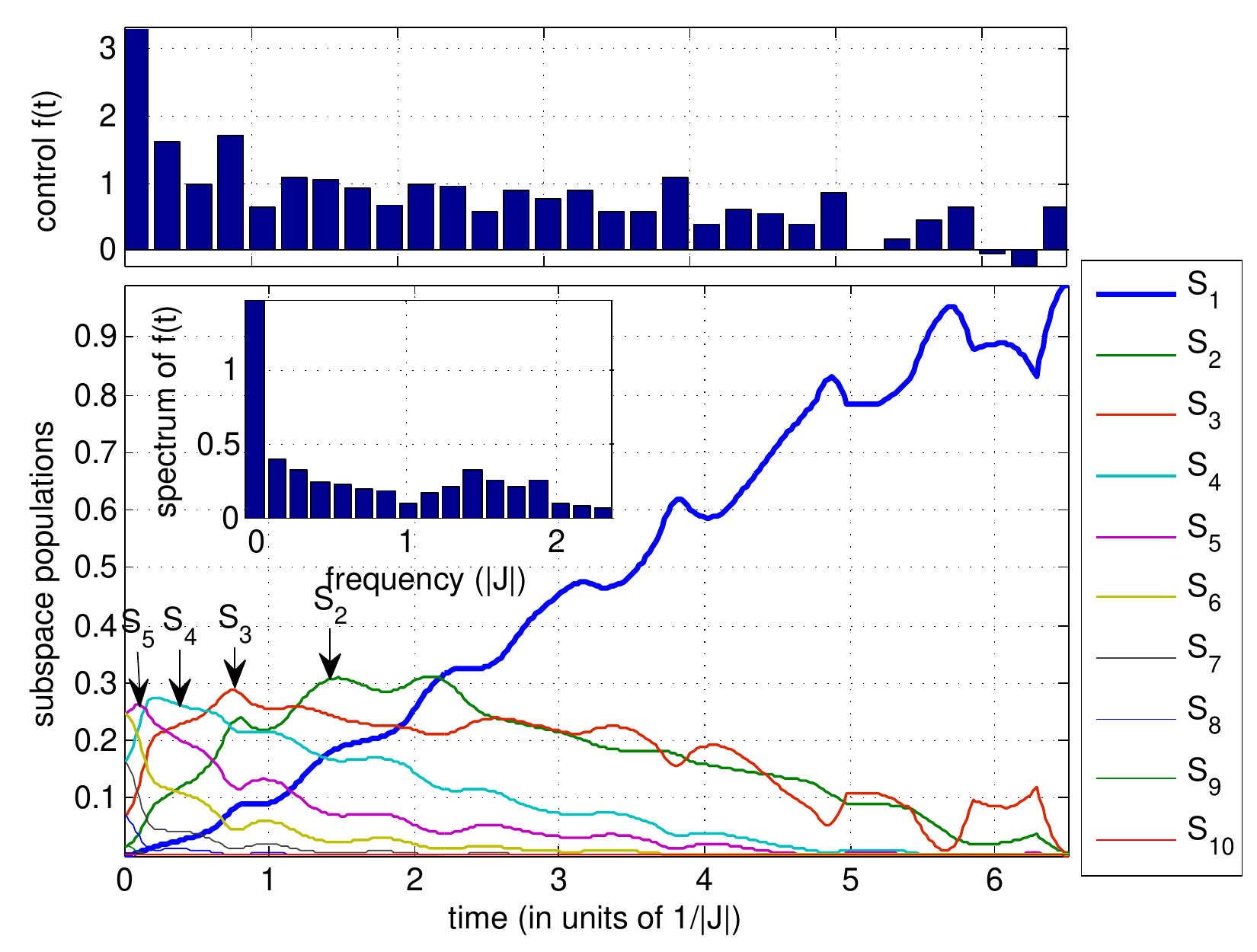}
\caption{ (Color online) Optimal control pulse (top) and evolution of the
ground state population $S_1$ (or fidelity $F(t)$) and other excited states 
population ($S_2, S_3,..., S_{10}$) for an Ising chain of length $N=10$
with $J<0$, starting in the state $\ket{\psi_0^+}$. The fidelity
corresponds to the projection onto the target GHZ state
$\ket{\psi_d^{(1)}}$. Inset shows spectrum of optimal pulse.}
\label{fig:GHZ-opti}
\end{figure}

Fig.~\ref{fig:GHZ-opti} shows an example of a typical optimal control
field $f(t)$ for a chain of length $N=10$ and the corresponding
evolution of the system.  The pulse appears well-behaved and quite
feasible both in time and frequency domain.  The corresponding evolution
of the system shows that the fidelity does not increase monotonically,
as is the case for an ideal adiabatic or Lyapunov control pulse.  The
figure also shows the evolution of the population of the eigenspace of
$H_0$.  For $N=10$, $H_0$, restricted to the relevant subspace $\H_s$ of
dimension 252, has $10$ distinct eigenvalues with corresponding
eigenspaces $S_{n}$ of varying dimensions.  The population of the 1D
ground state manifold $S_1$ corresponds to the fidelity $F(t)$.
Analysis of the population evolution shows that the increase in the
fidelity $F(t)$, or the ground state population $S_1$, is preceded by an
increase in the populations of the first and second excited state
manifold, and these increases are in turn preceded by peaks of the
populations of the eigenspaces $S_4$, $S_5$ and $S_6$, in this order.
This behavior can be explained in terms of the subspace coupling induced
by the interaction Hamiltonian $H_1$.  For the dynamics restricted to
the subspace $\H_s$, $H_1$ couples each eigenspace of $H_0$ only with
its first and second neighbor, i.e., the ground state manifold is
directly coupled only to the first and second excited state manifold,
and so forth.  Therefore population cannot be directly transferred from
e.g., $S_{10}$ to $S_{1}$, but must pass through several intermediate
levels.  Note that the maximum number of intermediate levels to be
traversed is linear in $N$ as the restriction of $H_0$ to the subspace
$\H_s$ containing the GHZ state $\ket{\psi_d^{(1)}}$ for a chain of
length $N$ has $N$ distinct eigenvalues.  If the system starts in the
state $\ket{\psi_0^+}$ or a state close to it, all of these eigenspaces
are initially populated.  This does not fully explain the behavior of the
optimized dynamics as there are many different excitation pathways and
optimal control attempts to maximize constructive interference between
all different paths leading to the desired outcome, but it suggests an
alternative explanation for the apparently linear dependence of the 
minimum transfer time on $N$, at least for chains up to length $N=12$, 
despite the exponential increase in the dimension of $\H_s$ from $2$ for 
$N=2$ to $924$ for $N=12$.

\section{Effect of Imperfections}

So far it was assumed that we have an ideal chain with uniform couplings
and can perfectly initialize the system in a pure separable state such
as $\ket{\psi_0^+}$ or the ground state of $\ket{\psi_0^{(f_0)}}$ for a
fixed large value of $f_0$ by cooling, and system inhomogeneity and the
effect of interactions with an environment were neglected.  To assess
the robustness of various control strategies with regard to such
imperfections in the context of this problem we shall consider how the
effectiveness of the control fields they produce for ideal systems is
diminished by various imperfections.

\subsection{Imperfect Initialization}

\begin{figure}
\includegraphics[width=\columnwidth]{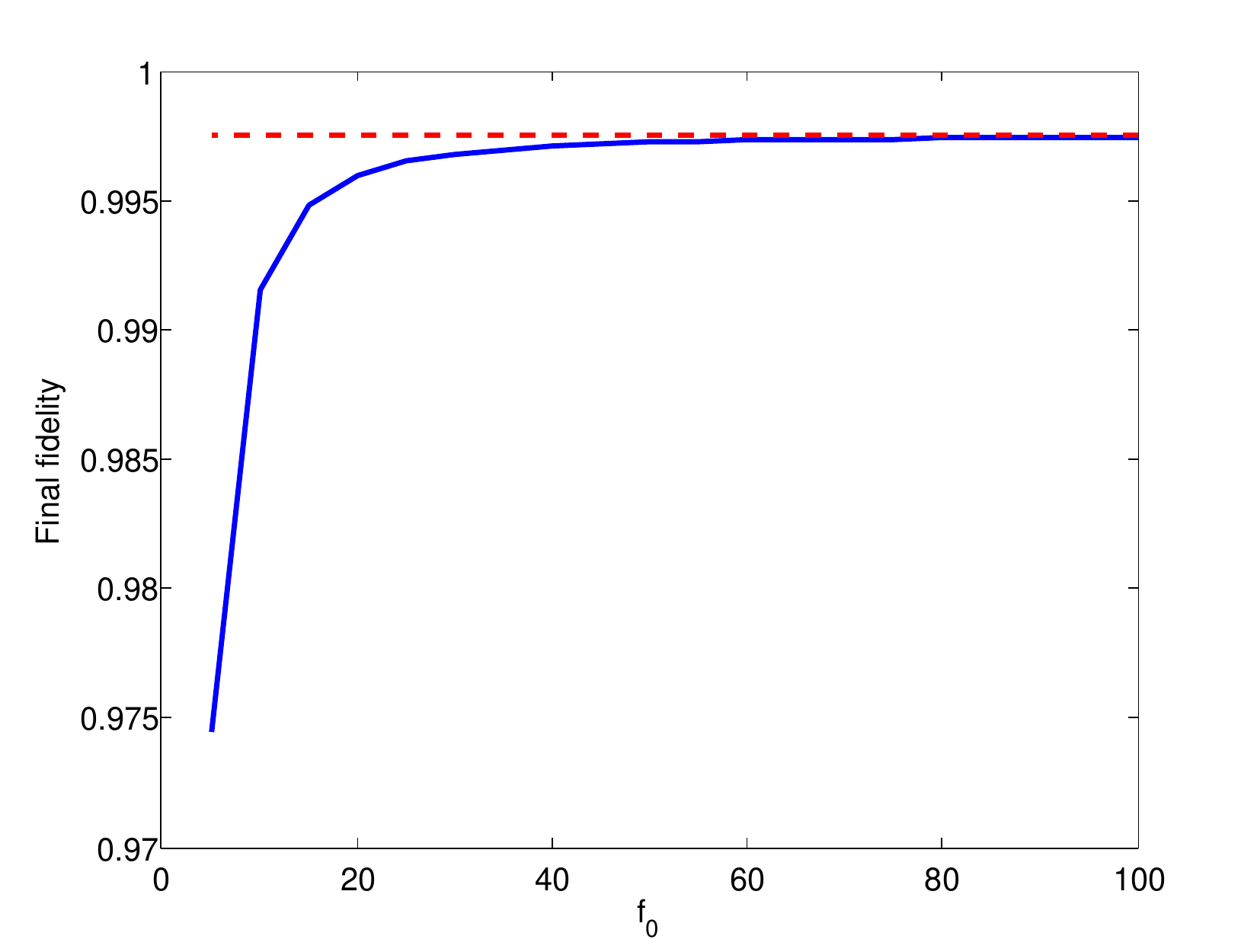} \caption{(Color
online) Final fidelity achieved for control pulse $f(t)$, which is
optimal for $J<0$ and the initial state $\ket{\psi_0^+}$, while the
actual initial state is the ground state $\ket{\psi_0^{(f_0)}}$, as a
function of $f_0$.  The error between the ideal (dashed line) and real
fidelity (solid line) goes to zero as $f_0$ increases.}
\label{fig:ini_error}
\end{figure}

Preparing the system in the ideal state $\ket{\psi_0^+}$ is not always
an easy task since we need to achieve $Jf_0\to-\infty$.  We have
already discussed using a finite $f_0$ for adiabatic passage and it was
shown that provided $f_0$ is sufficiently large, we always can realize
one of the GHZ states~(\ref{eqn:ghz-state}).  For Lyapunov and optimal
control strategies we may have been told that the system is initialized
in the state $\ket{\psi_0^+}$, while it was really initialized in the
ground state $\ket{\psi_0^{(f_0)}}$ for some finite value of $f_0$.  In
this case $f(t)$ is optimal for the initial state $\ket{\psi_0^+}$ and
using this field for the real initial state $\ket{\psi_0^{(f_0)}}$ does
not give us the maximal fidelity.  The effect of such errors is easy to
analyze.  Assume we have an ideal (optimal, adiabatic) control $f(t)$
that gives rise to an evolution $U(t)$ such that
$U(t_f)\ket{\psi_0^+}=\ket{\psi_d^{(1)}}$ for some finite $t_f$ or
$t_f\to +\infty$.  Let $\Pi_0=\ket{\psi_0^+}\bra{\psi_0^+}$ be the
projector onto $\ket{\psi_0^+}$. Then we have
\begin{align*}
  \ket{\psi_0^{(f_0)}}
  = \Pi_0\ket{\psi_0^{(f_0)}}+\Pi_0^\perp\ket{\psi_0^{(f_0)}}
  = c_0 \ket{\psi_0^+}+\Pi_0^\perp\ket{\psi_0^{(f_0)}}
\end{align*}
with $c_0=\ip{\psi_0^+}{\psi_0^{(f_0)}}$.  Furthermore
\begin{align*}
  \ip{\psi_d^{(1)}}{U(t_f)|\psi_0^{(f_0)}}
=& c_0\ip{\psi_d^{(1)}}{U(t_f)|\psi_0^+} \\
 & +\ip{\psi_d^{(1)}}{U(t_f)\Pi_0^\perp|\psi_0^{(f_0)}} = c_0
\end{align*}
as we have $U(t_f)\ket{\psi_0^+}=\ket{\psi_d^{(1)}}$, and $U(t_f)$ is
unitary. So, the maximum transfer fidelity is simply given by $|c_0|^2$,
the overlap of the actual initial state with the assumed initial state,
as illustrated in Fig.~\ref{fig:ini_error}.  For adiabatic control any
deviation of the initial state from the ground of the Hamiltonian at
time $t=0$ will limit the maximum transfer fidelity with the bound given
by $1-|c_0|^2$.  For optimal control the error can be decreased by
determining the initial state more accurately, e.g., using system
identification techniques such as state tomography, and incorporating
this information in the optimization.

\subsection{Thermal Effects}

\begin{figure}
\includegraphics[width=\columnwidth]{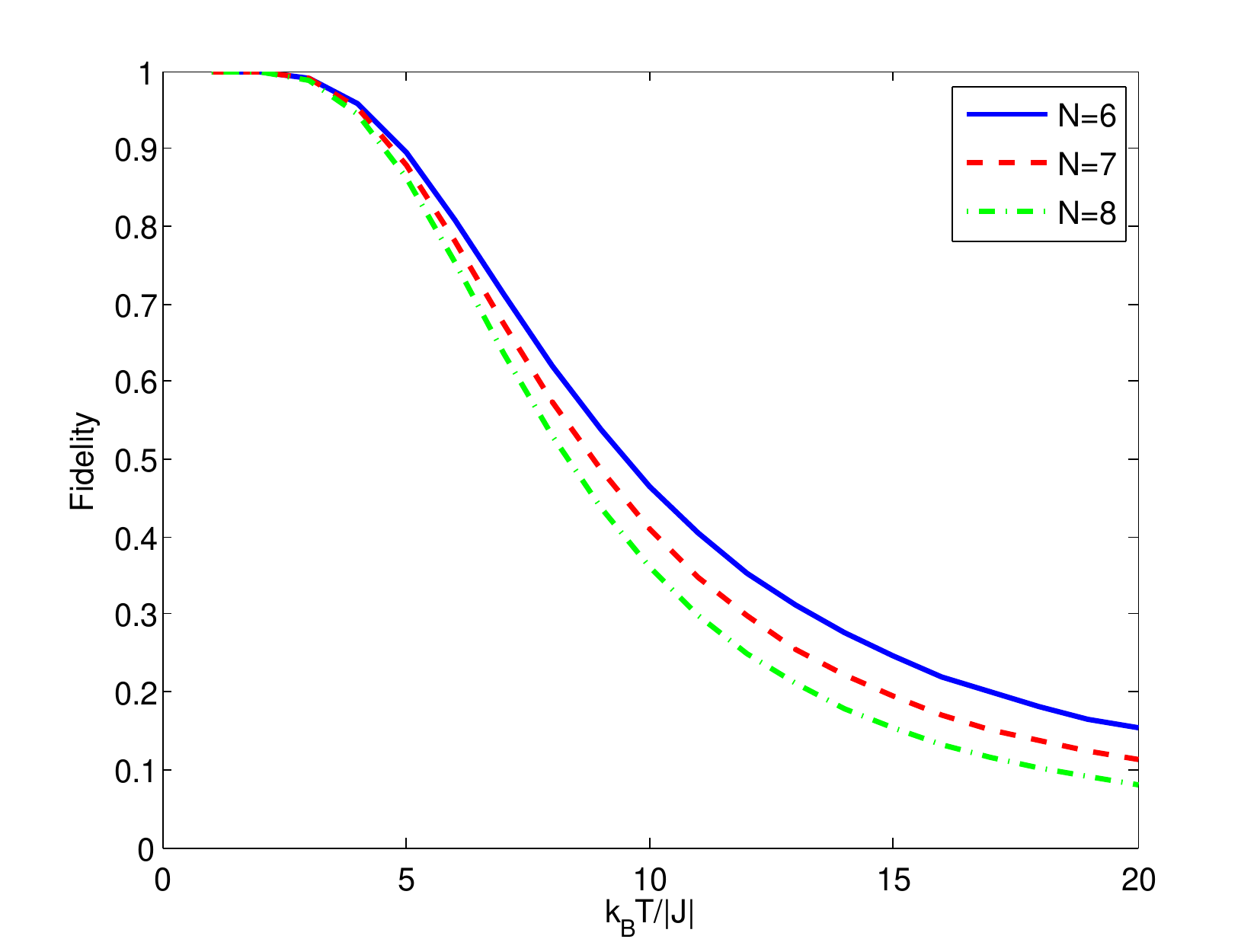}
\caption{(Color online) The final fidelity achieved by optimal
control for a thermal initial state (\ref{rho-thermal}) instead of a
pure state decreases with temperature, with a faster drop-off for
longer chains, as theoretically predicted, but high transfer
fidelities can still be achieved for reasonably low temperatures. }
\label{thermal-plot}
\end{figure}

Assuming the system is initialized by cooling in the presence of a
strong field in the $x$-direction, another source of error are thermal
effects, in particular the fact that zero temperature is not practically
achievable.  In general finite temperature effects result in an initial
state that is a thermal mixed state
\begin{equation}
  \label{rho-thermal}
  \rho(0) = \frac{e^{-H_f(0)/k_BT}}{\Tr[e^{-H_f(0)/k_BT}]},
\end{equation}
where $T$ is temperature and $k_B$ is the Boltzmann constant.  As the
populations of the eigenstates cannot change under adiabatic passage, we
see immediately that the maximum projection onto the target GHZ state we
can achieve is given by the ground state population
\begin{equation}
 \label{eqn:wn}
  w_1=\bra{\psi_0^{(f_0)}}\rho(0)\ket{\psi_0^{(f_0)}}.
\end{equation}
The initial population of the ground state depends both on the
temperature $T$ and the energy gap between the ground state and the
excited states of $H_f(0)$.  As the energy gap $\Delta \eps$ increases
roughly linearly with the field strength $f_0$, as we have seen earlier,
this explains why it is desirable to cool in the presence of a strong
field.

Based on the results in~\cite{chain-optimal} we might assume that the
transfer fidelity for optimal control could be improved by starting with
thermal initial state rather than the ground state.  Unfortunately, this
is not the case here because unlike in~\cite{chain-optimal} the
observable we are optimizing is a rank-1 projector onto a pure state,
$\ket{\psi_d^{(1)}}\bra{\psi_d^{(1)}}$, i.e., it has a single eigenvalue
of $1$ and all other eigenvalues are zero.  Therefore, the transfer
fidelity is bounded above by~\cite{Bounds}
\begin{equation}
  F(t_f)=\bra{\psi_d^{(1)}}\rho(t_f)\ket{\psi_d^{(1)}} \le w_1.
\end{equation}

Fig.~\ref{thermal-plot} shows the actual fidelities achieved by optimal
control for Ising chains of length $N=6,7,8$ with $J<0$ as a function of
temperature $T$, assuming the system is initialized in a thermal
ensemble of $H_f(0)$ with $f_0=10$.  The results are in line with the
expected decrease of the ground state population $w_1$ as a function of
the initial temperature $T$, and the fact that $w_1$ decreases faster
for longer chains.  In Fig.~\ref{fig:thermal_compare}, for Ising chain
with $N=6$ and $J<0$, assuming initialized in the thermal mixed state,
we have plotted how the final fidelity changes with respect to
temperature for the three different methods.  The relative flatness of
the curve for low temperatures suggests that all of these control
methods can achieve high-fidelity state transfer for reasonably long
chains and sufficiently low temperatures however, but the optimal
control pulses appear more robust with regard to thermal fluctuations
than the two other methods.

\begin{figure}
\includegraphics[width=\columnwidth]{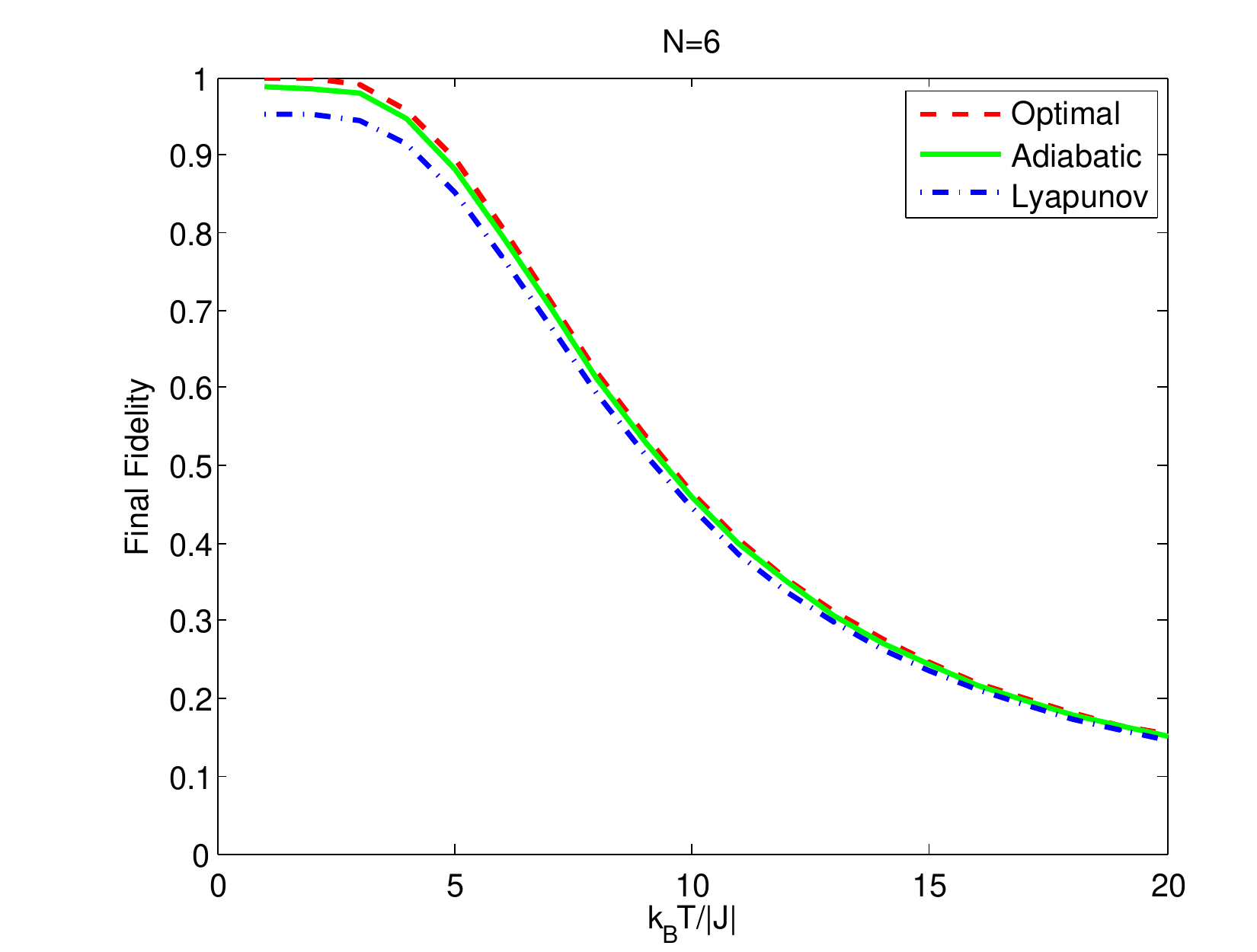}
\caption{(Color online) Final fidelities versus temperature for
different control methods for a chain of length $N=6$: (i) optimal
control, (ii) adiabatic control, and (iii) Lyapunov control, when the
control pulses generated assuming the system is initially in the ground
state of $H_0+f_0H_1$, are applied to initial states that are thermal
ensembles of different temperature.  The decrease in the final fidelity
as a function of the temperature is the similar for all three methods.}
\label{fig:thermal_compare}
\end{figure}

\subsection{Disordered Chains}

In reality it is impossible to make an absolutely uniform chain.  There
are always fluctuations in the spin coupling which make the chain
disordered.  To assess the effect of such inhomogeneity we consider the
case where the couplings between neighboring sites are randomly
perturbed around the average value $J$
\begin{equation} 
  \label{Ham_rand}
  H_f'(t)=\sum_{n=1}^{N-1}J(1+\delta_n) Z_nZ_{n+1}+f(t)\sum_{n=1}^{N}X_n,
\end{equation}
where $\delta_n\in [-\sigma,+\sigma]$ is a random variable with uniform
distribution around zero.  Since the random couplings are not known we
can only study their average effect over entanglement generation.  Thus,
we consider the evolution of the initial state $\ket{\psi_0^+}$ under
action on the perturbed $H_f'(t)$ for a pulse $f(t)$ optimized assuming
the Hamiltonian $H_f(t)$.  We repeat the experiment over 100 random
perturbations and then take the average value of the final fidelity over
all results.  Fig.~\ref{fig:rand_compare} shows the average fidelity as
a function of the parameter $\sigma$ of the fluctuations in the coupling
strength for different methodologies.  The figure shows that adiabatic
passage is very robust against disorder in the chain.  This is to be
expected as the ground state of the Hamiltonian $H_f'(t)$ is very
similar to $H_f(t)$ for all times when the fluctuations are small, and
therefore adiabatic passage is always able to steer the Ising chain from
$\ket{\psi^+}$ to the final GHZ state $\ket{\psi_d^{(1)}}$ independent
of the exact choice of the pulse $f(t)$.  Lyapunov and optimal control
pulses on the other hand rely on dynamic and interference effects to
achieve faster transfer, and the temporal shape of the optimal pulse is
thus more strongly dependent on the form of the Hamiltonian including
the coupling strength.  Thus, the optimal control pulses are more
susceptible to disorder although the optimal control pulses appear to
outperform adiabatic control for fluctuations up to $\sigma\approx
0.03$.  It is also interesting to note that the more efficient control
pulses obtained from global optimization techniques are more susceptible
than their less effective Lyapunov control cousins.  Again, the
performance of optimal control schemes can be significantly improved if
the actual couplings can be estimated more accurately using system
identification techniques~\cite{SysIdent} or closed-loop adaptive
strategies~\cite{adaptive}.  Alternatively, when accurate estimation of
the couplings is infeasible or impossible, e.g., because we wish to
control an ensemble of systems with slightly different couplings, the
performance of optimal pulses can (sometimes) be improved by optimizing
ensemble averages for a collection of systems as in \cite{Glaser},
although this may not be feasible for systems with more than a few spins
as optimizing over a collection of systems is computationally expensive,
with each function evaluation requiring the numerical solution of the
time-dependent Schrodinger equation for many systems.

\begin{figure}
\includegraphics[width=\columnwidth]{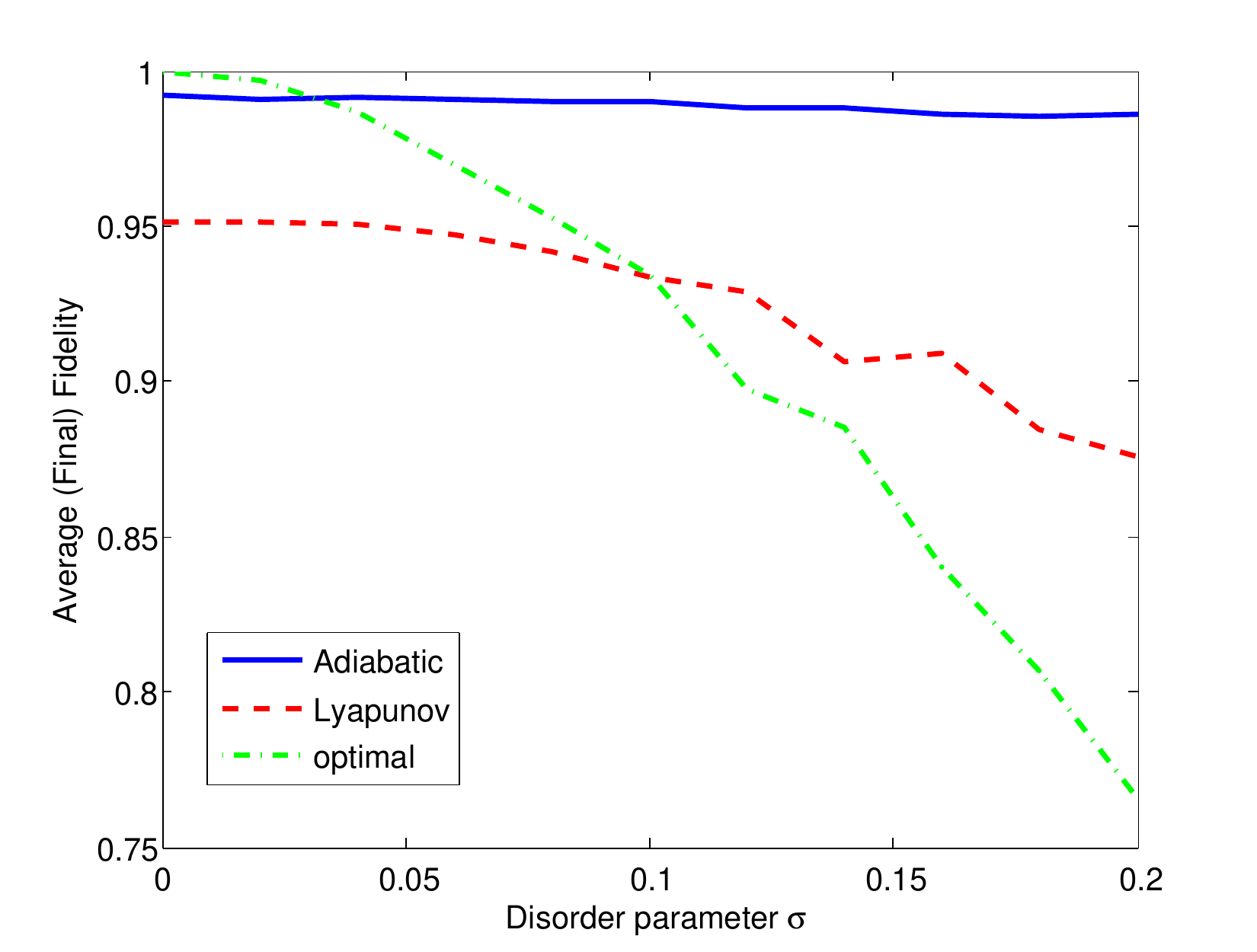}
\caption{(Color online) Comparison between different control methods for
GHZ generation for a chain of $N=6$ in terms of $\sigma$, the strength
of disorder. The target time was $t_f=50J^{-1}$ for adiabatic passage
and $t_f=15J^{-1}$ for Lyapunov and optimal control.}  \label{fig:rand_compare}
\end{figure}

\subsection{Decoherence Effect}

Another major problem in practice is that it is impossible to isolate
the system from its environment.  Any interaction with environment tends
to disturb the evolution of the system.  The precise effect of the
environment clearly depends on the type of interaction.  We shall assume
that the system-environment interaction is weak and Markovian and
modelled by a Lindblad equation
\begin{equation} 
 \label{Lindblad}
 \dot{\rho(t)}=-i[H_f(t),\rho(t)]+ \LL(\rho(t)),
\end{equation}
where $\LL(\rho)$ corresponds to dissipative effects and $\rho(t)$ is
the density matrix of the system.  We shall focus here on dephasing,
which destroys the coherence of the system, as it is common for many
physical systems and the typical dephasing times for most physical
systems are much shorter than other relaxation rates.

To assess the effect of decoherence we solve the Lindblad equation using
the appropriate adiabatic or optimized pulse $f(t)$ obtained for the
noiseless system.  For a dephasing noise we consider the specific model
\begin{equation}
  \label{Leinblad2}
    \LL(\rho(t))=-\gamma
    \sum_{n=1}^N\{\rho(t)-Z_n\rho(t) Z_n\},
\end{equation}
which corresponds to localized dephasing of individual spins.
Fig.~\ref{fig:dic_compare} shows the fidelity achieved as a function of
noise strength $\gamma$ for all methods.  As expected, the fidelity
decays as the noise strength increases for all strategies but adiabatic
passage is substantially more susceptible to the dephasing noise of the
form considered.  This is mainly due to the times involved: the transfer
times for the adiabatic pulses tend to be about one order of magnitude
greater than those for the optimal pulses, giving dephasing more time to
act and destroy the coherence.  Optimal control is more robust, and
theoretically, it may be possible to improve the performance of optimal
control by taking decoherence effects into account in the optimization
progress, although this is computationally demanding as it require full
density matrix optimzation for a density matrix of Hilbert space of
dimension $2^N$, as opposed to pure-state optimization on a subspace,
which can be done far more efficiently.

\begin{figure}
\includegraphics[width=\columnwidth]{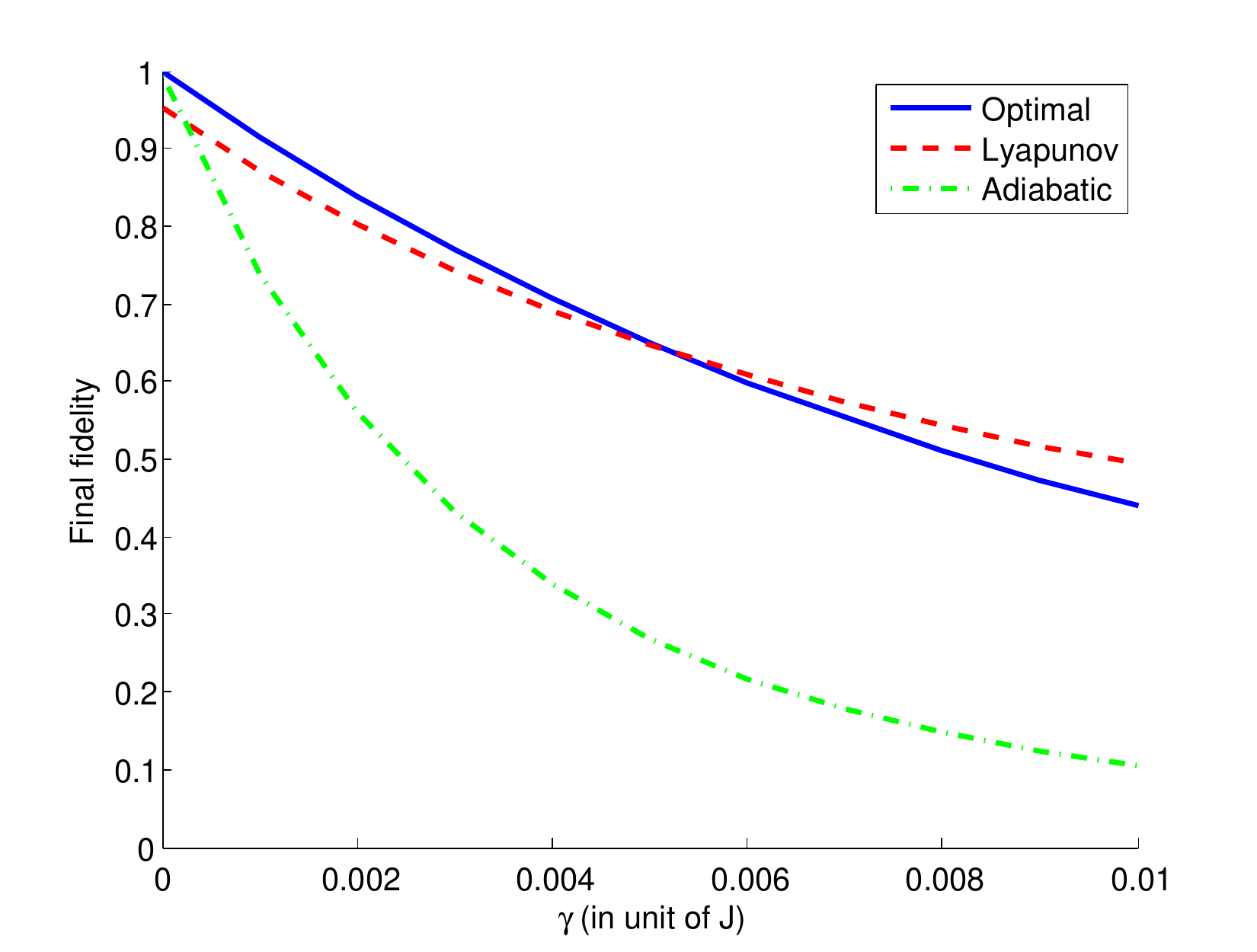} \caption{(Color
online) Effect of dephasing on different methods for GHZ generation for
$N=6$.  The longer pulses required for adiabatic passage leave it more
susceptible to dephasing.  $t_f=50J^{-1}$ for adiabatic passage,
$t_f=15J^{-1}$ for Lyapunov and optimal control.}
\label{fig:dic_compare}
\end{figure}

\section{Concluding discussion}

We have studied Ising chains subject to a single a global control such
as a magnetic field in the $x$-direction with regard to preparing them
in an entangled GHZ state.  Due to the existence of multiple symmetries
the system is not controllable, and the Hilbert space decomposes into
subspaces of varying dimensions that are invariant under the dynamics.
Our analysis further shows that even on the invariant subspaces the
dynamics is not controllable for all but the smallest subspaces for
$N>3$, and that the GHZ state of interest always lies in one of the
largest invariant subspaces on which the dynamics is not controllable
except for $N=2,3$.  Nonetheless the GHZ state is (asymptotically)
reachable from an easy-to-prepare initial state by adiabatic passage,
and under certain conditions finite-time reachability from a product
state such as $\ket{+\ldots+}$ can be inferred for chains of arbitrary
length $N$.  Motivated by this positive reachability result, three
different control strategies---adiabatic control, Lyapunov control and
optimal control---were considered to prepare a GHZ state, and their
advantages and disadvantages discussed, serving as a reference for
future experimental implementations.

Each method has its own advantages and disadvantages.  Broadly speaking
Lyapunov control is a simple form of optimal control design, which
produces pulses that are simple and quite effective for short chains but
the method struggles for longer chains, and for $N>4$ both adiabatic
and global optimal control seem to be generally superior.  In terms
of the total time required to prepare a GHZ state with high fidelity for
a given chain of length $N$, the control pulses found by global optimal
control proved the most efficient, with the time required being up to an
order of magnitude less than the time-scale for adiabatic transfer.  The
short pulse durations also confer greater robustness in the presence of
local spin dephasing compared to the adiabatic transfer scheme, and the
optimal control pulses appear slightly more robust with regard to finite
temperature effects.  The adiabatic transfer scheme, on the other hand,
is more robust with regard to system inhomogeneity such as unknown
fluctuations in the $J$-coupling between spins.  This is to be expected
considering that small system perturbations do not change the ground
state of the system appreciably, while such perturbations can alter the
interference between different excitation pathways that optimal control
designs aim to exploit.  The performance of the latter can be improved
by system identification or closed-loop adaptive strategies, and in 
some cases by explicitly taking inhomogeneity or decoherence into account
in the optimization.

Both the adiabatic pulse and the Lyapunov pulse have analytic
expressions, and especially for the adiabatic control, we can choose the
form of the slowly varying pulse without solving the dynamical equation.
It is interesting in this regard to note that although a linearly
decreasing field would appear to be the simplest choice for adiabatic
control, simulations and analysis suggest that a field of the form $f_0
e^{-\mu t}$ is generally preferable, resulting in both faster transfer
and increased robustness.  While the pulse shapes of the optimal pulses
are rather random and more complex than the corresponding adiabatic
control fields, the amplitude range and spectral bandwidth of the pulses
are quite narrow, and certainly appear to be within experimentally
accessible limits.  One major drawback of optimal control is that,
unlike for the adiabatic and Lyapunov control designs, there are no
explicit expressions for the optimal pulse.  Instead the pulse must be
computed by numerically, and the complexity and computational overhead
of calculating the optimal pulses increases rapidly with the length of
the chain.  Although the dynamics can be restricted to a subspace,
unfortunately, the GHZ state of interest belongs to one of the largest
invariant subspaces $\H_s$, whose dimension increases exponentially in
$N$ with $\dim \H_s \approx 2^N/4$ for large $N$, which substantially
increases the computational complexity.  Surprisingly, despite this
exponential increase of the subspace dimension, the minimum time
required to prepare a GHZ state given an initial product state such as
$\ket{+\ldots+}$ using optimal control appears to be actually linear in
$N$. Such fast production of a multi-spin GHZ (or Cat or NOON) state
using a minimal global field on an Ising chain will be highly valuable
in the fields of enhanced sensing and quantum information technology.

\emph{Acknowledgments:} SGS acknowledges funding from EPSRC ARF Grant
EP/D07192X/1, the QIPIRC and Hitachi.  XW thanks the Cambridge Overseas
Trust, Hughes Hall and the Cambridge Philosophical Society for support.
SB is supported by EPSRC ARF grant EP/D073421/1, through which AB is
also supported.  SB also acknowledges the Royal Society and the Wolfson
Foundation.

\end{document}